\newcommand{\bc}{\color{blue}}
\newcommand{\rc}{\color{black}}
\def\qed{$\Box$\medskip}
\newcommand{\mcFn}{\mathfrak{H}}
\renewcommand{\nu}{d}
\newcommand{\beqa}{\begin{eqnarray}}
\newcommand{\eeqa}{\end{eqnarray}}
\newcommand{\mrm}{\mathrm}
\renewcommand{\mathbf}{\boldsymbol}
\renewcommand{\i}{\mathrm i}
\newcommand{\ti}{\tilde}
\newcommand{\Om}{\Omega}
\newcommand{\ga}{\gamma}
\newcommand{\Si}{\Sigma}
\newcommand{\ka}{\kappa}
\newcommand{\be}{\beta}
\newcommand{\pa}{\partial}
\newcommand{\ov}{\overline}
\newcommand{\eps}{\epsilon}
\newcommand{\de}{\delta}
\newcommand{\De}{\Delta}
\newcommand{\e}{\mathrm{e}}
\newcommand{\hP}{\hat{P}}
\newcommand{\si}{\sigma}
\newcommand{\ph}{\phantom}
\newcommand{\h}{\fr{1}{2}}
\newcommand{\nat}{\mathbb{N}}
\newcommand{\hil}{\mathcal{H}}
\newcommand{\om}{\omega}
\newcommand{\fr}[2]{\frac{#1}{#2}}
\newcommand{\al}{\alpha}
\newcommand{\real}{\mathbb{R}}
\newcommand{\complex}{\mathbb{C}}
\newcommand{\la}{\lambda}
\newcommand{\non}{\nonumber}
\newcommand{\lan}{\langle}
\newcommand{\ran}{\rangle}
\newtheorem{theorem}{Theorem}[section]
\newenvironment{proof}{{\bf Proof}:}{\hspace*{\fill}$\Box$ \vspace{0.2cm}}
\newtheorem{definition}[theorem]{Definition}
\newtheorem{proposition}[theorem]{Proposition}
\newtheorem{lemma}[theorem]{Lemma}
\newtheorem{con}[theorem]{Conjecture}
\begin{document}
\vspace*{2cm}
\begin{center}
 \LARGE{Effective mass of the polaron - revisited }\bigskip
 \end{center}
\begin{center}
\large{Wojciech Dybalski, Herbert Spohn}
 \end{center}
 \begin{center}
Zentrum Mathematik and Physik Department, TUM,\\
Boltzmannstra{\ss}e 3, 85747 Garching, Germany.\\
 \tt{dybalski@ma.tum.de, spohn@tum.de}
 \end{center}

\vspace*{3.5cm}
\noindent
\textbf{Abstract}. Properties of the energy-momentum relation  for the Fr\"ohlich polaron
 are of continuing interest, especially for large values of the coupling constant.  
By combining spectral theory with the available results on the central limit theorem for the polaron path
measure we prove  that, except for an intermediate range of couplings,  the inverse effective mass is strictly positive and coincides with the diffusion constant.  
Such a result is established also for polaron-type models with a suitable ultraviolet cut-off and for arbitrary values of the coupling constant.  We point out  a slightly
stronger variant of the central limit theorem which would imply that the energy-momentum relation has a \emph{unique} global minimum
attained  at zero momentum. \vspace*{4cm}
\begin{flushright} 08.8.2019
\end{flushright}
\newpage
\section{Introduction}\label{sec1}
\setcounter{equation}{0} 

Polaron refers to an electron interacting with the lattice vibrations of a polar crystal, see \cite{AD10,GL91,S87} as a guide to the physics literature. In the conventional approximations the quantum  Hamiltonian  reads
\begin{equation}\label{1.1}
H = \tfrac{1}{2} p^2  + \int _{\mathbb{R}^d}\mathrm{d}k \,\omega(k) a^*(k) a(k) + \sqrt{\alpha} \int _{\mathbb{R}^d}\mathrm{d}k\frac{\hat{v}(k)}{\sqrt{2\omega(k)}}\big(\mathrm{e}^{\mathrm{i}kx}
a(k) + \mathrm{e}^{-\mathrm{i}kx}a^*(k)\big).
\end{equation}
We use units in which the bare electron mass equals one. $x,p$ are position and momentum of the electron in $\mathbb{R}^d$, $a^*(k),a(k)$ are the creation and annihilation operators of a free scalar Bose field over $\mathbb{R}^d$ with commutation relations $[a(k) ,a^*(k')] = \delta(k-k')$, $\omega$ is the dispersion relation of the Bose field, $\omega \geq 0$, continuous, and $\omega(Rk) = \omega(k)$
for all rotations $R$. The form factor $\hat{v}$ is assumed to be real rotation invariant and  has the  inverse Fourier transform\footnote{We use the convention $\hat v(k):=\fr{1}{(2\pi)^{d/2}}\int_{\real^d} dx\, \e^{-\mrm{i}kx}v(x)$ and occasionally write
$(Fv)(k):=\hat{v}(k)$.} $\hat{v}$.  $v(x)$ physically describes the smearing of the interaction between the electron and the Bose field. 
It is a standard convention to call  $g = \hat{v}/\sqrt{2\omega}$ the coupling function. Finally $\alpha$ is the coupling constant, $\alpha \geq 0$. Formally $H$ acts on the Hilbert space $\mathcal{H} = L^2(\mathbb{R}^d)\otimes \mathcal{F}$ with
$\mathcal{F}$ the Fock space of the Bose field. The coupling between field and particle is translation invariant and hence the total momentum
\begin{equation}\label{1.2}
P = p +P_\mathrm{f},\qquad P_\mathrm{f} = \int _{\mathbb{R}^d}\mathrm{d}k\,k \,  a^*(k)a(k),
\end{equation}
is conserved.

The \textit{Fr\"{o}hlich polaron} corresponds to the specific choice $d=3$, $\omega(k) = 1$, and $g(k) = ( \sqrt{2}\pi   |k|)^{-1}$. In particular $\|g\|_{2} = \infty$ because of ultraviolet divergence. Since the strong coupling physics is dominated by the large $|k|$ behavior of the coupling function, no ultraviolet cut-off can be afforded and  a separate discussion is required, see  Section~\ref{Froehlich-section}.
The acoustic polaron 
corresponds to $\omega(k) = |k|$ and other variations can be found in the physics literature.  As common practice \cite{S87,SM14}, we thus keep $d,\omega,g$ general for a while and add further assumptions on the way. 
For the purpose of this introductory discussion $\|g\|_{2} < \infty$ is assumed. Under precise conditions to  be stated in Section~\ref{polaron-type-section}, $H$ is a self-adjoint operator and has the fiber decomposition 
\begin{equation}\label{1.3}
H =\Pi^*\big(\int^{\oplus}_{\mathbb{R}^d} \mathrm{d}P\, H(P)\big)\Pi,
\end{equation}
since $P$ is conserved. (Here  $\Pi:=F \e^{\mrm{i} P_{\mrm{f}}x } $, where $F$ is the Fourier transform from $x$ to $P$ variable). The fiber Hamiltonian reads
\begin{equation}\label{1.4}
H(P) = \tfrac{1}{2}( P -P_\mathrm{f})^2 + \int _{\mathbb{R}^d}\mathrm{d}k \,\omega(k) a^*(k) a(k) + \sqrt{\alpha} \int _{\mathbb{R}^d}\mathrm{d}k \,g(k)\big(
a(k) + a^*(k)\big)
\end{equation}
and acts on $\mathcal{F}$.
The energy-momentum relation, $E(P)$, is the bottom of the spectrum of 
 $H(P)$,
\begin{equation}\label{1.5}
E(P) = \mathrm{inf}\, \mrm{spec} (H(P)).
\end{equation}
By construction $E(RP) = E(P)$ for all rotations $R$ and hence $E(P) = E_\mathrm{r}(|P|)$. Also, in generality,
\begin{equation}\label{1.6}
E(P) \geq E(0).
\end{equation}
As a widely accepted definition, the effective mass is the inverse of the curvature of $E(P)$ at $P=0$, which by rotation invariance means
\begin{equation}\label{1.7}
(m_\mathrm{eff})^{-1} = E_\mathrm{r}''(0).
\end{equation}

A long-standing open problem is to analyse the  effective mass of the Fr\"{o}hlich polaron  in the strong coupling regime 
\cite{P54,S87}.
As the coupling is increased, more and more bosons are bound to the electron and one would expect the effective mass to increase with 
$\alpha$, presumably to diverge in the limit. Recent progress has been achieved by Lieb and Seiringer \cite{LS19}, who  by functional 
analytic methods prove that indeed
$m_\mathrm{eff}(\alpha)$ diverges as $\alpha \to \infty$.
A mathematically orthogonal approach is to study the polaron path measure, as originally introduced by Feynman \cite{F55}, see also \cite{F,Ro94}. 
Mathematically this corresponds to a standard Brownian motion with a Gibbsian-like weight which depends only on the increments.
Thus one expects  to still observe diffusive behavior on large scales, however with an effective diffusion constant, $\sigma^2$.
In other words, one  conjectures the validity of a central limit theorem (CLT) for such weighted Brownian motion. 
In fact, the CLT has become now available for a large class of polaron models, including the assertion that $\sigma^2 >0$. We refer to Section~\ref{sec2} for more details.

As argued in   \cite{S87}, and presumably before,  effective mass and diffusion constant should be related as 
\begin{equation}\label{1.10}
(m_\mathrm{eff})^{-1} = \sigma^2.
\end{equation}
However,  at the time the reasoning was based on considering second moments for the position of the weighted Brownian motion,
a piece of information which is not so easily available from current CLT proofs.   For us this by itself is a convincingly enough reason to 
reconsider the case. As an extra bonus, apparently not noted before, the conventional CLT also yields spectral information about properties apparently unaccessible by current functional analytic techniques. 
To explain this point in more detail we assume the lower bound $\omega(k) \geq c_0 >0$ and further conditions as stated in Sections~\ref{polaron-type-section}  
and \ref{Froehlich-section}. Then $H(P)$ has a unique ground state with energy $E(P)$ for $|P|<\ga$ for some $\ga>0$.
Furthermore the continuum edge of $H(P)$ is strictly larger than $E(P)$  in this ball. Possible eigenvalues have a finite multiplicity and can accumulate only at the continuum edge \cite{F74,SM14}. In particular, $E_\mathrm{r}$ is real analytic in $P$ and $\sqrt{\alpha}$. From perturbation theory, requiring $\alpha$ to be sufficiently small,  one then infers that \begin{equation}\label{1.11}
E_\mathrm{r}''(0) > 0.
\end{equation}
But for larger $\alpha$ it is difficult to exclude $E(P)$ to have vanishing curvature at  $P=0$. Under to be stated conditions we will establish the identity
\eqref{1.11}. Hence from $\sigma^2 >0$ one concludes that $m_\mathrm{eff} <\infty$ for all $\alpha$.

A related issue is the long-standing (physically obvious) conjecture
\begin{equation}\label{1.12}
E(P) > E(0),\qquad P \neq 0. 
\end{equation}
The weaker property (\ref{1.6}) follows from a Kato inequality for $H$ \cite{F74}. So the real issue is to exclude points $P^*\neq 0$ at which $E(P^*) = E(0)$.
Property \ref{1.12} is claimed in \cite{GL91},  Statement 2. In the proof on p. 78, the authors argue with the degeneracy of the ground state of $H$. But, under the common assumptions,
$H$ has no ground state at all. To have a ground state would require the set $\{P\in \real^d\,|\, E(P) = E(0)\}$ to have non-zero Lebesgue measure. 
In Section~\ref{General-analysis} we will explain how \eqref{1.12}  follows from a  CLT with yet to be studied boundary conditions. Alternatively one might invoke a suitable large deviation result in the context of the available boundary conditions.  

Our paper is organized as follows. In Section~\ref{sec2} we explain the connection to the probabilistic CLT and discuss recent results of interest in our context.
In Sections~\ref{polaron-type-section} and~\ref{Froehlich-section} we show the relations (\ref{1.10}), (\ref{1.11}) for polaron-type models with the UV cut-off  $\|g\|_{2} < \infty$ and for the Fr\"ohlich polaron, respectively. In Section~\ref{General-analysis} we study the functional analytic side of a CLT with general two-sided pinning. 

\section{Probabilistic approach and central limit theorem}\label{sec2}
\setcounter{equation}{0} 
We choose the boundary states $\phi_\pm = \varphi_\pm \otimes\Omega \in \mathcal{H}$ with $\Omega$ the Fock vacuum and define
\begin{equation}\label{2.1}
\tilde{G}_{T_-,T_+}(k,t) = \langle \phi_{-}, \mathrm{e}^{-T_-H}\mathrm{e}^{\mathrm{i}kx}\mathrm{e}^{-tH}\mathrm{e}^{-\mathrm{i}kx}\mathrm{e}^{-T_+H}\phi_+\rangle, 
\quad t,T_-,T_+ \geq 0.
\end{equation}
Using the direct inegral decompostion (\ref{1.3}), (\ref{1.4}), see also  formula~(\ref{Pi-on-state}), one obtains the identity
\begin{equation}\label{2.1a}
\tilde{G}_{T_-,T_+}(k,t) =  \int_{\real^ d} \mrm{d}P\,  \ov{\hat{\varphi}_-(P)}  \hat{\varphi}_+(P) \lan  \Omega, \e^{-T_-H(P)} \e^{-tH(P+k)} \e^{-T_+H(P)} \Omega\rangle. 
\end{equation}
In spirit of the Feynman-Kac formula for a Schr\"{o}dinger operator,  the semigroup $\mathrm{e}^{-tH}$, $ t \geq 0$, can be written as a weighted average with respect to a Gaussian measure. For the particle  trajectories we introduce the Wiener measure $\mathbb{P}^\mathrm{W}$ with expectation $\mathbb{E}^\mathrm{W}$. The continuous paths of the Wiener measure are denoted by $q(t)$. The Bose field maps to the Gaussian process $u(x,t)$ whose  
path measure is denoted by $\mathbb{P}^{\mathrm{G}}$, with expectation $\mathbb{E}^{\mathrm{G}}$. The Gaussian process has mean zero,
is stationary in space-time, and is uniquely defined through its covariance
\begin{equation}\label{2.2}
\mathbb{E}^{\mathrm{G}}  \big(u(x,t)u(x',t')\big) = \int_{\mathbb{R}^d} \mathrm{d} k \fr{1}{(2\pi)^d}  \frac{1}{2\omega(k)} \mathrm{e}^{\mathrm{i}k(x-x')}\mathrm{e}^{-\omega(k)|t-t'|}.
\end{equation}
Then 
\begin{eqnarray}\label{2.3}
&&\hspace{-30pt}\tilde{G}_{T_-,T_+}(k,t)  = \mathbb{E}^\mathrm{W}\!\!\times\! \mathbb{E}^\mathrm{G}\Big( \varphi_-(q(-T_-)) \varphi_+(q(T_++t)) \mathrm{e}^{-\mathrm{i}k(q(t) - q(0))} \nonumber\\
&&\hspace{50pt}\times\exp\Big[\sqrt{\alpha}\int_{-T_-}^{T_++t}\mathrm{d}s
\int_{\mathbb{R}^d} \mathrm{d}x\, v(x) u(q(s) - x,s) \Big]\Big),
\end{eqnarray}
which makes more explicit how $v(x)$ smears the field $u$ relative to the position of the particle.  If $\|g\|_2 < \infty$,
the term in the square brackets is a well-defined Gaussian random variable with respect to $\mathbb{P}^{\mathrm{G}}$. The Gaussian average $\mathbb{E}^\mathrm{G}$ can be carried out explicitly leading to  
\begin{eqnarray}\label{2.4}
&&\hspace{-30pt}\tilde{G}_{T_-,T_+}(k,t) =
\mathbb{E}^\mathrm{W} \Big(\varphi_-(q(-T_-))\varphi_+(q(T_++t))  \mathrm{e}^{-\mathrm{i}k(q(t) - q(0))}\nonumber\\
&&\hspace{50pt}\times\exp\Big[\tfrac{1}{2}\alpha \int_{-T_-}^{T_++t}\mathrm{d}s \int_{-T_-}^{T_++t}\mathrm{d}s'\,  W(q(s) - q(s'), s - s')\Big]\Big)
\end{eqnarray}
with 
\begin{equation}\label{2.5}
W(x,t) = \int _{\mathbb{R}^d}\mathrm{d}k|g(k)|^2\mathrm{e}^{\mathrm{i}kx} \mathrm{e}^{-\omega(k)|t|}.
\end{equation}
Note that $W$ is real, continuous, rotation invariant in $x$, and $|W(x,t)| \leq \|g\|^2_{2}$. In particular the integrand under the double time integral appearing in \eqref{2.4} is pathwise bounded and continuous. 

The Fr\"{o}hlich polaron is the special case $d=3$, $\omega = 1$, and $g(k) =( \sqrt{2\pi} |k|)^{-1}$, thereby defining the Hamiltonian $H^\mathrm{Fr}$, for which 
self-adjointness is established in \cite{F74,GW16}.
The kernel $W$ of the Fr\"{o}hlich polaron is given by 
\begin{equation}\label{2.6}
W^\mathrm{Fr}(x,t) =  |x|^{-1}\mathrm{e}^{-|t|},
\end{equation}
which is no longer bounded. Still the factor $\exp[\cdot]$ in \eqref{2.4} is integrable \cite{BT17}. To establish the validity of  the basic identity  \eqref{2.4} for $H^\mathrm{Fr}$ one introduces the cut-off coupling $g_{\ka}(k) = (  \sqrt{2}\pi  |k|)^{-1}\mathrm{e}^{-\fr{1}{\ka} |k|}$, thereby defining the
Hamiltonian $H_{\ka}$. The strong limit $\lim_{\ka \to \infty} \mathrm{e}^{-tH_{\ka} } =  \mathrm{e}^{-tH^\mathrm{Fr}}$ is established  in \cite{F74, Mi10}, 
which  controls the left side of \eqref{2.4}. On the right side $W$ is replaced by 
\begin{equation}\label{2.7}
W_\kappa(x,t) =   |x|^{-1}  \fr{2}{\pi}\arctan(\ka|x|)\mathrm{e}^{-|t|},
\end{equation}
which increases monotonously to $W^\mathrm{Fr}(x,t)$. Thus by monotonicity the right hand side of  \eqref{2.4} converges to the corresponding expression with kernel $W^\mathrm{Fr}(x,t)$
and hence  \eqref{2.4} remains valid for the Fr\"{o}hlich polaron.

In \eqref{2.4} the reference process is a standard Brownian motion over the time interval $[-T_-,t + T_+]$. The Brownian motion is pinned by the 
function $\varphi_-$ at the left border and by $\varphi_+$ at the right one. The Brownian path is weighted by the exponential of the double time integral involving $W$.
Note that the weight depends only on the increments. To have a probability measure we have to normalize by the partition function  $\tilde{G}_{T_-,T_+}(0,t)$. The difference $q(t) - q(0)$ is the Brownian motion increment over the time interval $[0,t]$. Of interest is its characteristic function, i.e. the Fourier transform of  the corresponding probability density function. Altogether this leads to the normalized characteristic function
\begin{equation}\label{2.8}
G_{T_-,T_+}(k,t) = \tilde{G}_{T_-,T_+}(k,t)/\tilde{G}_{T_-,T_+}(0,t).
\end{equation}
Depending on the precise set-up, one then has to establish the limits $T_-,T_+ \to \infty$ followed by the CLT which requires $t \to \infty$. 

In the probabilistic literature, 
two distinct boundary conditions have been studied and we discuss them one by one. In both cases $T_- =0$, $\varphi_-(x) = \delta(x)$, and
$\varphi_+(x) = 1$, of which the latter two have to be approximated by a suitable sequence of $L^2$ functions. We set $T_+ = T$ in the sequel. 

In \cite{BS05} and the follow-up by Gubinelli \cite{G06} the authors require the conditions
 \begin{equation}\label{2.9}
 \int _{\mathbb{R}^d}\mathrm{d}k|g(k)|^2\big(\sum_{j = 1,2,3} \omega^{-j}\big) < \infty,\quad
 \int _{\mathbb{R}^d}\mathrm{d}k|g(k)|^2|k|^2 \big(\sum_{j = 2,4} \omega^{-j}\big)<\infty.
\end{equation}
They consider $\ti G_{0,T}$ of the form
\beqa
\ti G_{0,T}(k,t)=\mathbb{E}^\mathrm{W} \Big(\delta(q(0))  \mathrm{e}^{-\mathrm{i}kq(t)}
\exp\Big[\tfrac{1}{2}\alpha \int_{0}^{T+t}\mathrm{d}s\int_{0}^{T+t}\mathrm{d}s'W(q(s) - q(s'), s - s')\Big]\Big)
\eeqa
and establish the limit
\begin{equation}\label{2.10}
\lim_{T \to \infty} G_{0,T}(k,t) = G_{0,\infty}(k,t).
\end{equation}
The CLT is proved, thus ensuring the limit
\begin{equation}\label{2.11}
\lim_{\epsilon \to 0} G_{0,\infty}(\epsilon k, \epsilon^{-2}t) =  \mathrm{e}^{-\frac{1}{2} \sigma^2 k^2 t}
\end{equation}
for some $\sigma >0$. In fact, the stronger functional CLT is established, see  \cite[Theorem 1.1]{BS05}. 

It is instructive to rewrite the expectation values from above in the language of operators as in (\ref{2.1}), (\ref{2.1a}), with the result  
\begin{eqnarray}\label{2.12}
&&\hspace{-46pt}\tilde{G}_{0,T}(k,t) = \langle \phi_{-}, \mathrm{e}^{\mathrm{i}kx}\mathrm{e}^{-tH}\mathrm{e}^{-\mathrm{i}kx}\mathrm{e}^{-TH}\phi_+\rangle\nonumber\\
&&\hspace{6pt}= \int_{\mathbb{R}^d}\mathrm{d}P\delta(P)  \langle \Omega, \mathrm{e}^{-tH(P+k)}\mathrm{e}^{-TH(P)}\Omega\rangle
= \langle \Omega, \mathrm{e}^{-tH(k)}\mathrm{e}^{-TH(0)}\Omega\rangle,
\end{eqnarray}
where we used $\hat{\varphi}_-(P) = (2\pi)^{-d/2}$, and $\hat{\varphi}_+(P) = (2\pi)^{d/2}\delta(P)$. For polaron-type models  treated in 
Section~\ref{polaron-type-section} below $H(0)$ has a spectral gap and a unique ground state $\psi_0$, thus by the spectral theorem
\begin{equation}\label{2.13}
G_{0,\infty}(k,t) = \langle \Omega, \mathrm{e}^{-t(H(k)- E(0))}\psi_0\rangle/\langle \Omega, \psi_0\rangle.
\end{equation}

More recently, Mukerjee and Varadhan studied the CLT under weaker conditions than imposed in  \cite{BS05,G06}. Their starting formula is
\begin{equation}\label{2.14}
\tilde{G}_{0,0}(k,t) =
\mathbb{E}^\mathrm{W} \Big(\delta(q(0))  \mathrm{e}^{-\mathrm{i}kq(t)}
\exp\Big[\tfrac{1}{2}\alpha \int_{0}^{t}\mathrm{d}s\int_{0}^{t}\mathrm{d}s'W(q(s) - q(s'), s - s')\Big]\Big),
\end{equation}
hence $T = 0$, which one recognizes as a particular case of (\ref{2.12}) and 
\begin{equation}\label{2.15}
G_{0,0}(k,t) = \langle \Omega, \mathrm{e}^{-tH(k)}\Omega\rangle/ \langle \Omega, \mathrm{e}^{-tH(0)}\Omega\rangle.
\end{equation}
In \cite[Theorem 4.2]{MV18} the CLT  of the following form is established for the Fr\"{o}hlich polaron, 
\beqa
\lim_{\epsilon \to 0} G_{0,0}(\epsilon k, \epsilon^{-2}t) =  \mathrm{e}^{-\frac{1}{2} \sigma^2 k^2t}, \label{central-limit-Froehlich}
\eeqa
for some $\si>0$, with the restriction  $\alpha\in [0,\alpha_0)\cup (\alpha_1,\infty)$ for some $0<\alpha_0<\alpha_1<\infty$. 

The functional CLT is not touched upon. In the related study \cite{MV18a} the strong coupling limit and its relation to the Pekar process are investigated. 
Mukherjee \cite{M19}    also starts from \eqref{2.14} and considers a general weight function $W$, for which he requires
$|W(x,t)| \leq C (1 + |t|)^{-(2 +\delta)}$ for some $C, \delta > 0$. In particular, this condition covers the polaron 
whenever $g \in L^2$. [ In the currently posted version in addition $W \geq 0$ is required. As communicated to us 
by the author this condition can be dropped.] In \cite[Theorem 2.1]{M19} the conventional CLT of the form (\ref{central-limit-Froehlich}) is proved for arbitrary
$\al\geq 0$. 

Physically one is also interested in the behavior of $E(P)$ away from the origin. Starting from \eqref{2.15}, instead of $k = \mathcal{O}(\epsilon)$ and would have to consider $k = P = 
\mathcal{O}(1)$, which probabilistically is a problem of large deviations. In Section~\ref{General-analysis} we explore a different approach by starting from \eqref{2.1} with general 
square-integrable boundary functions $\varphi_\pm$ in the limit $T_\pm \to \infty$, but still invoking a CLT.

\renewcommand{\Phi}{\phi}
\renewcommand{\Psi}{\psi}
\section{Polaron-type models with a UV cut-off} \label{polaron-type-section}
\setcounter{equation}{0}
In this section we show that $\si>0$ appearing in the CLT  (\ref{central-limit-Froehlich})  coincides with the
effective mass for  a large class of polaron-type Hamiltonians with a UV cut-off.
It is convenient to start from a family of the fiber Hamiltonians of the form
\beqa
H(P)=\h (P-P_{\mrm{f}})^2+H_{\mrm{f}}+ \sqrt{\alpha} \int _{\mathbb{R}^d}\mathrm{d}k\,g(k)\big(a(k) + a^*(k)\big), \label{fiber-hamiltonians}
  \eeqa
where $H_\mrm{f}= \int _{\mathbb{R}^d}\mathrm{d}k\, \omega(k) a^*(k) a(k)$, $P_{\mrm{f}}= \int _{\mathbb{R}^d}\mathrm{d}k\, k\, a^*(k) a(k)$. 
Further asumptions are listed~in\medskip\\
\textbf{Condition C}. (i) $g \in L^2(\real^d)$ is real and rotation invariant.  The coupling constant $\al\geq 0$ is arbitrary.\medskip\\
(ii) $\omega(k) \geq c_0 >0$, $\omega$ is continuous, rotation invariant, and sub-additive in the sense that
\beqa
\omega(k_1+k_2) \leq \omega(k_1) + \omega(k_2).
\eeqa
Then, by the Kato-Rellich theorem, $H(P)$ are self-adjoint, semi-bounded operators on the  domain $D(P_{\mrm{f}}^2+H_{\mrm{f}})$ which is independent of $P$.
By the  direct integral formula (\ref{1.3})  one obtains 
a Hamiltonian of the form (\ref{1.1}). 
Under the above assumptions, the HVZ theorem for these models was shown in \cite{F74,SM14}.  
All the properties below can be found in \cite{SM14} except for part 0 for which we refer to \cite{Gro72} or \cite[Section 15.2]{Sp},
and part 6 which can be found in \cite{MR12}.
We refer to \cite{SM14} for  a discussion of the literature.
\begin{lemma}\label{Jacob}\emph{\cite{SM14,F74}} Assume Condition $C$ 
and define $E(P)=\inf\, \mrm{spec}(H(P))$, $E_{\mrm{ess}}(P)=\inf \mrm{spec}_{\mrm{ess}}(H(P))$.
Then the following statements hold true:
\begin{enumerate}

\item[0.]  $E(0)\leq E(P)$ for all $P\in \real^d$.

\item[1.]  $E_{\mrm{ess}}(P)=\mrm{inf}_{k\in \real^d} (E(P-k)+\om(k))$.

\item[2.] The interval $\mathcal{I}_0:=\{\,P\in \real^d\,|\, E(P)<E_{\mrm{ess}}(P)\,\}$ 
 is non-empty and contains a neighbourhood of any global minimum of $|P|\mapsto E_{\mrm{r}}(|P|)$.

\item[3.]  $E(P)$ is an isolated, simple eigenvalue for $P\in \mathcal{I}_0$.

\item[4.]  If $\om$ is bounded, then, for any $\ov{E}\in \real$, $\lim_{|P|\to \infty, E(P)\leq \ov{E}} (E_{\mrm{ess}}(P)- E(P))=0$.\\
In general, $E(P)\geq c_1\om(P)+c_2$ for some  $c_1>0$.

\item[5.] For $P\in \mathcal{I}_0$ we have $|\lan \Om , \psi_P\ran|>0$, where $\psi_P$ is the ground state of $H(P)$.

\item[6.] $\mathcal{I}_0\ni P\mapsto E(P)$ and $P\mapsto |\psi_P\ran \lan \psi_P|$ are real analytic functions.

\end{enumerate}

\end{lemma}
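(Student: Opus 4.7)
The plan is largely compilatory: items 0--6 are classical facts about fiber Hamiltonians of polaron-type models, and the task is to extract the right logical dependencies under Condition $C$ and to invoke the literature for the technical ingredients. Item 0 I would obtain from the Feynman--Kac identity \eqref{2.4} combined with the fiber formula \eqref{2.1a}: the Gaussian integrand in \eqref{2.4} is pathwise positive, so $|\lan\Om,\e^{-tH(P)}\Om\ran|\leq \lan\Om,\e^{-tH(0)}\Om\ran$, and the exponential decay rate as $t\to\infty$ delivers $E(P)\geq E(0)$. For item 1 (HVZ), I would establish the upper bound $E_{\mrm{ess}}(P)\leq E(P-k)+\om(k)$ for each $k$ by constructing a Weyl sequence of the form $a^*(f_n)\psi_{P-k}^{(\veps)}$, where $\psi_{P-k}^{(\veps)}$ is an approximate ground state of $H(P-k)$ and $f_n$ is an $L^2$-normalised wave packet whose momentum concentrates on $k$ while its position support escapes to infinity. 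The matching lower bound is the standard geometric partition of $\Fock$ into bosons near and far from the electron combined with subadditivity of $\om$, as carried out in \cite{F74,SM14}.

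Items 2, 3, 5, and 6 then follow as corollaries. For item 2, note that item 0 makes $P=0$ a global minimum of $E_{\mrm{r}}$; at any such minimum $P^*$ one has $E(P^*)\leq E(P^*-k)$ for all $k$, hence by item 1 and $\om(k)\geq c_0>0$, $E_{\mrm{ess}}(P^*)\geq E(P^*)+c_0>E(P^*)$, placing $P^*$ in $\mathcal{I}_0$; continuity of $E$ and $E_{\mrm{ess}}$ extends this to a neighbourhood. Item 3 combines isolation (the defining property of $\mathcal{I}_0$) with a Perron--Frobenius argument in a $Q$-space realisation of $\Fock$ where $\e^{-tH(P)}$ is positivity-improving on a natural cone, yielding simplicity. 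Item 5 uses the same cone: the vacuum $\Om$ lies in its interior, so the strictly positive ground state has nonzero overlap with it. Item 6 is Kato--Rellich applied to the type-(A) analytic family $P\mapsto H(P)$, whose $P$-dependence is polynomial; on $\mathcal{I}_0$ this yields real analyticity of both the isolated simple eigenvalue $E(P)$ and its rank-one spectral projector, as detailed in \cite{MR12}.

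Item 4 is the most delicate. The general lower bound $E(P)\geq c_1\om(P)+c_2$ follows from controlling the interaction by a Kato--Rellich bound and comparing the free part $\tfrac{1}{2}(P-P_{\mrm{f}})^2+H_{\mrm{f}}$ to $\om(P)$ on a suitable sector. When $\om$ is bounded, subadditivity with $k=P$ gives $E(P)\leq E(0)+\om(P)$, so $E$ stays bounded; combined with $E_{\mrm{ess}}(P)\leq E(0)+\om(P)$, the gap $E_{\mrm{ess}}(P)-E(P)$ is controlled by $E(0)+\om(P)-E(P)$, and its vanishing along sequences with $|P|\to\infty$, $E(P)\leq\ov{E}$ amounts to showing that the ground state energy saturates the one-boson threshold asymptotically, which I would defer to \cite{SM14}. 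The single most delicate ingredient of the lemma is the HVZ lower bound of item 1, on which items 2 and 4 both rest; its classical proof requires a notationally heavy geometric partition of $\Fock$ and a boson-localisation estimate, so I would invoke \cite{F74,SM14} rather than reproduce it.
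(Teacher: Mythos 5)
Your treatment of items 1--6 tracks the paper's own, which is essentially compilatory: both rest on \cite{SM14}, \cite{F74}, \cite{MR12}, and Kato's analytic perturbation theory, and your sketches are consistent with what those sources supply. (A minor terminological slip: the tool for item 6 is analytic perturbation theory for type-(A) families, not the Kato--Rellich theorem, which concerns self-adjointness.)

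There is, however, a genuine gap in your argument for item 0. From the pathwise positivity of the weight in \eqref{2.4} you correctly obtain $|\lan\Om,\e^{-tH(P)}\Om\ran|\leq\lan\Om,\e^{-tH(0)}\Om\ran$, but the step ``the exponential decay rate as $t\to\infty$ delivers $E(P)\geq E(0)$'' does not follow. Writing $\mu_P$ for the spectral measure of $H(P)$ in the state $\Om$, the decay rate of $\lan\Om,\e^{-tH(P)}\Om\ran$ is $\inf\supp\,\mu_P$, which satisfies $\inf\supp\,\mu_P\geq E(P)$ but can be strictly larger whenever the low-lying spectral subspace of $H(P)$ has vanishing overlap with $\Om$. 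Your bound therefore gives only $\inf\supp\,\mu_P\geq\inf\supp\,\mu_0$, which is the wrong direction for $E(P)\geq E(0)$. Excluding the bad scenario would require cyclicity of $\Om$ for $H(P)$ --- a nontrivial Perron--Frobenius input which is circular here, since the vacuum-overlap property (item~5) is established only on $\mathcal{I}_0$, and you use item~0 to locate $\mathcal{I}_0$ in item~2. The references the paper cites for item 0, \cite{Gro72} and \cite[Section 15.2]{Sp}, avoid the issue by proving a pointwise diamagnetic inequality $|(\e^{-tH(P)}\psi)(q)|\leq(\e^{-tH(0)}|\psi|)(q)$ in a $Q$-space representation, valid for \emph{all} $\psi$ and not merely the vacuum; this yields the operator-norm bound $\|\e^{-tH(P)}\|\leq\|\e^{-tH(0)}\|=\e^{-tE(0)}$, hence $E(P)\geq E(0)$, with no cyclicity input. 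The vacuum matrix-element estimate you rely on is strictly weaker and does not suffice by itself.
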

Let us comment briefly on the proof of  properties 1--5 and the role of various assumptions on $\om$. 
We consider the thresholds 
\beqa
E^{(n)}(P):=\inf_{k_1, \ldots, k_n\in \real^d} (E(P-k_1-\cdots-k_n)+\om(k_1)+\cdots+\om(k_n)).
\eeqa
Assuming only that $\om$ is continuous, bounded and massive (i.e. $\omega(k) \geq c_0 >0$),  Theorem~2.1 of \cite{SM14} gives $E_{\mrm{ess}}(P)=\inf_{n\geq 1}   E^{(n)}(P)$.
As sub-additivity of $\om$ clearly gives monotonicity of thresholds, with this additional assumption one obtains property 1~of Lemma~\ref{Jacob}
above. As pointed out in \cite{SM14}, it is clear from this relation, and from the fact that $\om$ is massive, that if   $E(P)< \inf_{P'} E(P')+\inf_{k'}\om(k')$ 
then $E(P)< E_{\mrm{ess}}(P)$, which gives property 2  of Lemma~\ref{Jacob}.
Clearly, the spectrum below $ E_{\mrm{ess}}(P)$ consists
at most of eigenvalues of finite multiplicity with $E_{\mrm{ess}}(P)$ as the only possible accumulation point. Thus  $E(P)$ is an isolated 
eigenvalue, which  is simple by Theorem 2.3 of \cite{SM14}. Thus we obtain property 3  of Lemma~\ref{Jacob}. For the first part of property 4 and property 5 
we refer to Theorems 2.3 and 2.4 of \cite{SM14}. The second part of property 4 can be found in \cite{F74} (see also \cite[Section~15.2, property (v)]{Sp}).

As for part 6, 
we note that for $\xi$ in the resolvent set of $H(P_0)$ the function $P\mapsto (H(P)-\xi)^{-1}$ can be expanded around any $P_0\in \real^d$ as in formula~(\ref{expansion-in-P}).  
The real analyticity of  the  eigenprojections $\mathcal{I}_0\ni P\mapsto |\Psi_P\ran \lan \Psi_P|$ follows immediately via the Cauchy formula. (We note that
by a suitable choice of the phase, we can ensure that $\mathcal{I}_0\ni P\mapsto \psi_P$ is norm-continuous, which is the property we will need below).
Since $|P|\mapsto H(\hat{P} |P|)$ is a real analytic family of self-adjoint operators in the sense of \cite[Chapter VII, \S 1, \S 3]{Ka} and $P\mapsto E(P)$ is a rotation invariant function, we obtain by \cite[Chapter VII, \S 3]{Ka} that $\mathcal{I}_0\ni P\mapsto E(P)$ is real analytic.

Now we are ready to state and prove our main result concerning polaron-type models with a UV cut-off. 
\begin{theorem}\label{corollary-polaron-type}
Consider polaron-type models satisfying Condition C. 
Then, for all $\alpha\geq 0$,  
\begin{equation}\label{4.10}
(m_\mathrm{eff})^{-1} = \sigma^2 >0.
\end{equation}
\end{theorem}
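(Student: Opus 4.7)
The plan is to extract $E_r''(0)$ from the spectral representation (\ref{2.15}) of $G_{0,0}$ in the diffusive scaling and match it against the CLT limit $e^{-\frac{1}{2}\sigma^2 k^2 t}$. Under Condition C, the kernel in (\ref{2.5}) satisfies $|W(x,t)|\leq \|g\|_2^2 e^{-c_0|t|}$, so Mukherjee's CLT \cite{M19} applies for every $\alpha\geq 0$, giving (\ref{central-limit-Froehlich}) with some $\sigma>0$. It therefore suffices to prove that the left hand side of (\ref{central-limit-Froehlich}) equals $e^{-\frac{1}{2}E_r''(0)|k|^2 t}$.

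The spectral ingredients are all contained in Lemma~\ref{Jacob}. By property~0, $P=0$ is a global minimum of $E_r$, so by property~2 there exists $\gamma>0$ with $B_\gamma(0)\subset\mathcal{I}_0$. On $B_\gamma(0)$, property~3 gives a simple isolated ground state $\psi_P$, property~5 gives $\langle\Omega,\psi_P\rangle\neq 0$, and property~6 gives real-analyticity of $P\mapsto E(P)$ and of the Riesz projection $P\mapsto|\psi_P\rangle\langle\psi_P|$. By shrinking $\gamma$ we may assume the distance from $E(P)$ to $\mathrm{spec}(H(P))\setminus\{E(P)\}$ is bounded below by some $\Delta>0$ uniformly on $B_\gamma(0)$: indeed, $H(P)(1-|\psi_P\rangle\langle\psi_P|)\geq (E(P)+\Delta(P))(1-|\psi_P\rangle\langle\psi_P|)$ with $\Delta(P)$ lower semicontinuous, and $\Delta(0)>0$ by property~3. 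Consequently, for $P\in B_\gamma(0)$ and $t>0$, the spectral theorem yields
\begin{equation}
\langle\Omega,e^{-tH(P)}\Omega\rangle=e^{-tE(P)}\bigl(|\langle\Omega,\psi_P\rangle|^2+r(P,t)\bigr),\qquad |r(P,t)|\leq e^{-t\Delta}.
\end{equation}

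Now fix $k\in\real^d$ and $t>0$, and choose $\epsilon>0$ small enough that $\epsilon k\in B_\gamma(0)$. Inserting the above expansion into (\ref{2.15}) with $k\rightsquigarrow\epsilon k$, $t\rightsquigarrow\epsilon^{-2}t$ gives
\begin{equation}
G_{0,0}(\epsilon k,\epsilon^{-2}t)=e^{-\epsilon^{-2}t\,[E(\epsilon k)-E(0)]}\cdot\frac{|\langle\Omega,\psi_{\epsilon k}\rangle|^2+O(e^{-\epsilon^{-2}t\Delta})}{|\langle\Omega,\psi_0\rangle|^2+O(e^{-\epsilon^{-2}t\Delta})}.
\end{equation}
By real analyticity of $P\mapsto|\langle\Omega,\psi_P\rangle|^2$ at $0$ and property~5, the ratio of prefactors tends to $1$ as $\epsilon\to 0$. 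By real analyticity and rotation invariance of $E$, the Taylor expansion at the minimum reads $E(\epsilon k)-E(0)=\tfrac{1}{2}E_r''(0)\,\epsilon^2|k|^2+O(\epsilon^4|k|^4)$, so $\epsilon^{-2}t[E(\epsilon k)-E(0)]\to\tfrac{1}{2}E_r''(0)|k|^2 t$. Hence $\lim_{\epsilon\to 0}G_{0,0}(\epsilon k,\epsilon^{-2}t)=e^{-\frac{1}{2}E_r''(0)|k|^2 t}$. Comparing with (\ref{central-limit-Froehlich}) yields $E_r''(0)=\sigma^2$, and since $\sigma>0$ this gives (\ref{4.10}).

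The main obstacle I anticipate is the uniform spectral gap: one must check that the gap $\Delta(P)$ between $E(P)$ and the remainder of $\mathrm{spec}(H(P))$ does not collapse as $P$ moves away from $0$. Below $E_{\mathrm{ess}}(P)$ there may be additional eigenvalues of $H(P)$ that approach $E(P)$, so a priori $\Delta(P)$ need not be continuous. The clean way around this is to fix the Riesz projection $|\psi_0\rangle\langle\psi_0|$ at $P=0$ with gap $\Delta_0>0$, and use the norm-continuity (property~6) of $P\mapsto|\psi_P\rangle\langle\psi_P|$ to deduce that for $P$ sufficiently close to $0$ the spectrum of $H(P)(1-|\psi_P\rangle\langle\psi_P|)$ lies above $E(P)+\Delta_0/2$, which is all that is needed. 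Everything else in the argument is standard spectral-theoretic bookkeeping once the CLT and Lemma~\ref{Jacob} are in hand.
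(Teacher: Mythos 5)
Your argument is correct and follows essentially the same route as the paper's proof: express $G_{0,0}(\epsilon k, \epsilon^{-2}t)$ via the spectral representation, isolate the ground-state contribution using Lemma~\ref{Jacob}, and match the resulting exponential $e^{-\frac{1}{2}E_\mathrm{r}''(0)|k|^2 t}$ against the CLT limit. The two worthwhile refinements you add — spelling out why Mukherjee's CLT applies for all $\alpha\geq 0$ under Condition C (the $W$-bound $|W(x,t)|\leq\|g\|_2^2 e^{-c_0|t|}$), and making the uniform spectral gap near $P=0$ explicit via norm-continuity of the Riesz projection — are implicit in the paper (the latter is swept into "by the spectral calculus"), but they do not change the structure of the argument.
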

\begin{proof}  The proof relies on the CLT as stated in (\ref{central-limit-Froehlich}). 
We consider the expression
\beqa
G_{0,0}(\eps k,\eps^{-2}t)=\fr{ \langle \Omega, \mathrm{e}^{-\fr{t}{\eps^2} (H(\eps k)-E(0)) }\Omega\rangle}{\langle \Omega, \mathrm{e}^{-\fr{t}{\eps^2} (H(0)-E(0)) } \Omega\rangle}.
\label{G-0-0}
\eeqa
The spectral calculus immediately gives $\lim_{\eps\to 0}\langle \Omega, \mathrm{e}^{-\fr{t}{\eps^2} (H(0)-E(0)) } \Omega\rangle=\lan \Om, \psi_0\ran \lan \psi_0, \Om\ran$. 
 Concerning the numerator in (\ref{G-0-0}), we obtain
\begin{align}
 \langle \Omega, \mathrm{e}^{-\fr{t}{\eps^2} (H(\eps k)-E(0)) }\Omega\rangle&= \langle \Omega, \mathrm{e}^{-\fr{t}{\eps^2} (H(\eps k)-E(0)) }\psi_{\eps k}\ran 
 \lan \psi_{\eps k}, \Omega\ran \label{leading-term} \\
&\ph{44}+\langle \Omega, \mathrm{e}^{-\fr{t}{\eps^2} (H(\eps k)-E(0)) }(|\psi_{\eps k}\ran 
 \lan \psi_{\eps k}|)^{\bot} \Omega\ran\label{sub-leading} \\
&\ph{44} \underset{\eps\to 0}{\to}  \lan \Om, \psi_0\ran \lan \psi_0, \Om\ran    \e^{-\h tk^2 (\pa^2_{|P|} E_{\mrm{r}})(0)  },
\end{align}
where in the leading term (\ref{leading-term}) we used the analyticity of $P\mapsto E(P)$ near zero (see Lemma~\ref{Jacob})
and we noted that the expression in  (\ref{sub-leading}) tends to zero as $ \eps \to 0$ by the spectral calculus.
\end{proof}

We remark that a priori the diffusion constant obtained from the CLT of the characteristic function \eqref{2.15}  could differ from the one of 
\eqref{2.11}.   Our result implies that they agree under Condition C and assumption (\ref{2.9}).
\newpage

\section{The Fr\"ohlich polaron} \label{Froehlich-section}
\setcounter{equation}{0}
Let $H_{\ka}(P)$ be the polaron Hamiltonian (\ref{fiber-hamiltonians}) with  $d=3$, $\om\equiv 1$  and $g(k)=\fr{ \chi_{[0,\ka]}(|k|)} {  \sqrt{2} \pi |k|}$, where $\ka$ is the UV cut-off\footnote{We use
here a different UV cut-off than in the discussion in Section~\ref{sec2}. However, the limiting Fr\"ohlich Hamiltonians $H^{\mrm{Fr}}(P)$ are the same, as one can infer
from \cite[Proposition~A.4]{Mi10} and the strong convergence  of the Gross transform.}.
Explicitly, it has the form  
\beqa
H_{\ka}(P)=\h (P-P_{\mrm{f}})^2+N_{\mrm{f}}+ \sqrt{\alpha} \int _{|k|\leq \ka}\mathrm{d}k \fr{1}{   \sqrt{2} \pi  |k|}\big(a(k) + a^*(k)\big), \label{cut-off-Hamiltonians}
\eeqa
where $N_{\mrm{f}}$ is the number operator. It is well known, that this sequence of Hamiltonians converges in the norm resolvent sense
as $\ka\to \infty$ to the limiting Fr\"ohlich Hamiltonian $H^{\mrm{Fr}}(P)$.  Also, the sequence of the full Hamiltonians 
$H_{\ka}=\Pi^*\int^{\oplus} \mrm{d}P\, H_{\ka}(P) \Pi$ converges in the norm resolvent sense to $H^{\mrm{Fr}}=\Pi^*\int^{\oplus}\mrm{d}P H^{\mrm{Fr}}(P) \,\Pi$, cf. 
\cite{GW16}  and references therein.
 Making use of these approximation properties, Lemma~\ref{Jacob} and further results from \cite{Mi10}, it is easy to establish the
 following:
\begin{lemma}\label{Froehlich-theorem}  Let $E(P)=\inf\, \mrm{spec}(H^{\mrm{Fr}}(P))$  and $E_{\mrm{ess}}(P)=\inf \mrm{spec}_{\mrm{ess}}(H^{\mrm{Fr}}(P))$.
Then the following statements hold true:
\begin{enumerate}
\item[0.] $E(0)\leq E(P)$ for all $P\in \real^3$.

\item[1.]  $E_{\mrm{ess}}(P)=E(0)+1$.  

\item[2.] The interval $\mathcal{I}_0:=\{\,P\in \real^d\,|\, E(P)<E_{\mrm{ess}}(P)\,\}$ 
  contains a neighbourhood of any global minimum of $|P|\mapsto E_{\mrm{r}}(|P|)$. 
 
\item[3.]  $E(P)$ is an isolated, simple eigenvalue for $P\in \mathcal{I}_0$.

\item[4.]  All global minima of $|P|\mapsto E_{\mrm{r}}(|P|)$ are contained in a compact set.

\item[5.] For $P\in \mathcal{I}_0$ we have $|\lan \Om , \psi_P\ran|>0$, where $\psi_P$ is the ground state of $H(P)$.

\item[6.]  $\mathcal{I}_0\ni P\mapsto E(P)$ and $P\mapsto |\psi_P\ran \lan \psi_P|$ are real analytic functions.

\end{enumerate}

\end{lemma}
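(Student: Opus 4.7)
The plan is to reduce each of the assertions 0--6 to the corresponding statement of Lemma~\ref{Jacob} applied to the UV cut-off polaron $H_{\ka}(P)$, and then pass to the limit $\ka\to\infty$. By \cite{Mi10,GW16}, there is a $P$-independent renormalization constant $E_{\ka}^{\mrm{ren}}$ such that $H_{\ka}(P)-E_{\ka}^{\mrm{ren}}\to H^{\mrm{Fr}}(P)$ in norm resolvent sense; writing $E_{\ka}(P):=\inf\mrm{spec}(H_{\ka}(P))$, this gives $E_{\ka}(P)-E_{\ka}^{\mrm{ren}}\to E(P)$ pointwise in $P$, which immediately transports property~0 from the cut-off case. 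For property~1, sub-additivity of $\om\equiv 1$ and part 1 of Lemma~\ref{Jacob} yield $E_{\mrm{ess},\ka}(P)=\inf_{k}(E_{\ka}(P-k)+1)$; choosing $k=P$ for the upper bound and using $E_{\ka}(0)\leq E_{\ka}(P-k)$ (property~0) for the lower bound, this collapses to $E_{\mrm{ess},\ka}(P)=E_{\ka}(0)+1$. Subtracting $E_{\ka}^{\mrm{ren}}$ and invoking the norm-resolvent limit, the right side tends to $E(0)+1$ and the left side is identified with $E_{\mrm{ess}}(P)$.

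Properties~2 and 3 then follow in one stroke from 0 and 1: at any global minimum $P_{*}$ of $|P|\mapsto E_{\mrm{r}}(|P|)$, $E(P_{*})=E(0)<E(0)+1=E_{\mrm{ess}}(P_{*})$, so an open neighbourhood of $P_{*}$ lies in $\mathcal{I}_{0}$; standard spectral theory gives that the part of $\mrm{spec}(H^{\mrm{Fr}}(P))$ below $E_{\mrm{ess}}(P)$ consists of isolated eigenvalues of finite multiplicity, and simplicity of $E(P)$ is taken from \cite{Mi10}.

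The main obstacle will be property~4. The cut-off analogue, part~4 of Lemma~\ref{Jacob} with the bounded dispersion $\om\equiv 1$, says $E_{\mrm{ess},\ka}(P)-E_{\ka}(P)\to 0$ as $|P|\to\infty$ on each set of uniformly bounded energy, i.e.\ $E_{\ka}(P)\to E_{\ka}(0)+1$. What is actually needed is the analogous asymptotic $E(P)\to E(0)+1$ for the Fr\"ohlich Hamiltonian, from which property~4 is immediate because a global minimum satisfies $E(P)=E(0)<E(0)+1$. The difficulty is that norm-resolvent convergence is pointwise in $P$ and does not automatically deliver uniformity as $|P|\to\infty$. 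The natural remedies are either to redo the HVZ-type argument directly for $H^{\mrm{Fr}}(P)$---constructing Weyl sequences by attaching a free boson of suitable momentum to a near-ground-state of $H^{\mrm{Fr}}(0)$ and exploiting $\om\equiv 1$ to show the corresponding energies collapse to $E(0)+1$---or to establish a uniform-in-$\ka$ version of the decay in Lemma~\ref{Jacob}.4 before taking $\ka\to\infty$. I would attempt the former first, since translation of bosons by $P$ is straightforward when $\om$ is constant.

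For properties~5 and 6, the spectral gap of property~3 combined with norm-resolvent convergence ensures that the rank-one ground-state projections $|\psi_{\ka,P}\ran\lan\psi_{\ka,P}|$ converge in norm to $|\psi_{P}\ran\lan\psi_{P}|$ for $P\in\mathcal{I}_{0}$, so $|\lan\Om,\psi_{P}\ran|^{2}=\lim_{\ka}|\lan\Om,\psi_{\ka,P}\ran|^{2}$; positivity of this limit (as opposed to merely non-negativity via part~5 of Lemma~\ref{Jacob}) is supplied by \cite{Mi10}. Property~6 is then a direct analytic perturbation argument for $H^{\mrm{Fr}}(P)$: expanding the resolvent around any $P_{0}\in\mathcal{I}_{0}$ as in (\ref{expansion-in-P}), the Cauchy integral of the resolvent over a contour enclosing $E(P_{0})$ alone gives real analyticity of the eigenprojection, and real analyticity of $E(P)$ follows from the Kato criterion \cite[Chapter VII]{Ka} applied to the real-analytic self-adjoint family $|P|\mapsto H^{\mrm{Fr}}(\hat{P}|P|)$, which is available after the Gross transform as used in \cite{MR12}.
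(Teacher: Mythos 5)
Your overall strategy—reduce each item to the cut-off Lemma~\ref{Jacob} and pass to the limit $\ka\to\infty$, supplementing by results from \cite{Mi10} for simplicity and strict positivity—is the same as the paper's, and your treatment of properties 0--3, 5, 6 is essentially correct. One small misconception worth flagging: no renormalization constant $E_{\ka}^{\mrm{ren}}$ is needed here. The special feature of the Fr\"ohlich polaron ($\om\equiv 1$) is that $H_{\ka}(P)\to H^{\mrm{Fr}}(P)$ in norm-resolvent sense \emph{without} any divergent subtraction, unlike the massless Nelson model; the paper states this directly and it simplifies the bookkeeping in parts 0 and 1.

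The genuine gap is in property 4. You correctly identify the obstacle (pointwise-in-$P$ norm-resolvent convergence of the fibers gives no uniformity as $|P|\to\infty$), but neither proposed remedy closes it. The Weyl-sequence construction you sketch would only show $E(0)+1\in\mrm{spec}_{\mrm{ess}}(H^{\mrm{Fr}}(P))$, an upper bound on $E_{\mrm{ess}}(P)$ that is already contained in property 1; to exclude far-out global minima one needs the complementary statement that $E(P)$ itself rises toward $E(0)+1$, i.e. that the spectral gap closes, and Weyl sequences do not give energy \emph{lower} bounds. The paper's resolution is a different, and more elementary, argument that converts fiber-wise convergence into uniform-in-$P$ convergence by going to the \emph{full}, non-fibered Hamiltonian: assume for contradiction $|P_\ell|\to\infty$ with $E(P_\ell)=E(0)$, pick $f\in C_0^\infty(\real)$ supported in a ball of radius $<1$ around $E(0)$ with $f(E(0))=1$; norm-resolvent convergence $H_{\ka}\to H^{\mrm{Fr}}$ and the direct-integral decomposition give
$\|f(H_{\ka})-f(H^{\mrm{Fr}})\|=\sup_{P\in\real^3}\|f(H_{\ka}(P))-f(H^{\mrm{Fr}}(P))\|\to 0$,
while for each fixed $\ka$ the spectrum of $H_{\ka}(P_\ell)$ is eventually disjoint from $\mrm{supp}\,f$ by Lemma~\ref{Jacob}.4, yet $\|f(H^{\mrm{Fr}}(P_\ell))\|\geq f(E(0))=1$ because $E(0)\in\mrm{spec}(H^{\mrm{Fr}}(P_\ell))$. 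This forces $0\geq 1$. The uniformity-via-direct-integral trick is the key idea your proposal is missing.
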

Let us comment on the proofs of the above properties. It is a general consequence of the strong resolvent convergence
that for any eigenvalue $\la$ of $H(P)$ there exists an approximating sequence $\la_{\ka}\to \la$ of eigenvalues
of $H_{\ka}(P)$ \cite[Theorem VIII.24]{RSI}. Therefore, part~0 of Lemma~\ref{Froehlich-theorem}  follows from part 0 of Lemma~\ref{Jacob}.
Next, by \cite[Proposition A.4]{Mi10}, $\lim_{\ka\to \infty} E_{\mrm{ess},\ka}(P)=E_{\mrm{ess}}(P)$, where 
$E_{\mrm{ess},\ka}(P)$ is the bottom of the essential spectrum of $H_{\ka}(P)$. Now part 1 of Lemma~\ref{Froehlich-theorem}
follows from part 1 of Lemma~\ref{Jacob} applied to the case of  $\om\equiv 1$. (Alternatively, one can refer to \cite[Section IV]{S88}).  Parts 2 and 3 of Lemma~\ref{Froehlich-theorem}
follow  from parts 1 and 2 of the same {\rc lemma}, {\rc considering that the proof of \cite[Theorem 6.4]{Mi10} gives the uniqueness of the ground state 
whenever it exists, also outside of the ball $|P|<\sqrt{2}$ from the statement of the theorem}. Concerning part 4, suppose by contradiction that there is a sequence
$P_{\ell}$, $\ell\in \nat$, s.t. $E(P_{\ell})=E(0)$ and $|P_{\ell}|\to \infty$. We pick a function $f\in C_0^{\infty}(\real)$ supported in
a ball around $E(0)$ of radius strictly smaller than $1$ and s.t.   $0\leq f\leq 1$ and  $f(E(0))=1$. Then, by the norm resolvent convergence of $H_{\ka}$ and \cite[Theorem VIII.20]{RSI} we have
\begin{align}
0&=\lim_{\ka\to \infty}\| f(H_{\ka} )-f(H^{\mrm{Fr}})\|=\lim_{\ka\to \infty} \sup_{P\in \real^3}\| f( H_{\ka}(P) )-f(H^{\mrm{Fr}}(P) )\|\non\\
 &\geq \lim_{\ka\to \infty} \sup_{\ell \geq \ell_{\ka}  }\| f(H_{\ka}(P_{\ell}) )-f(H^{\mrm{Fr}}(P_{\ell}) )\|=1,
 \end{align}
 which is a contradiction. Here in the third step we choose $\ell_{\ka}$ so large that the spectrum of  $H_{\ka}(P_{\ell})$ is outside of the support
 of $f$, which is possible by Lemma~\ref{Jacob}, part 4.  Part 5 of Lemma~\ref{Froehlich-theorem} is a consequence of the strict positivity statement in \cite[Theorem~6.4]{Mi10}, {\rc where again  we can disregard the restriction $|P|<\sqrt{2}$, considering the structure of the proof}.
 Part 6 is {\rc proven analogously as the corresponding part of Lemma~\ref{Jacob}, given the input from}  Appendix~\ref{holomorphy}. 

Now we come to our main result concerning the Fr\"ohlich polaron.
\begin{theorem} \label{Froehlich-corollary} Consider the Fr\"ohlich polaron. 
Then, for all $\alpha\in [0,\alpha_0)\cup (\alpha_1,\infty)$ for some $0<\alpha_0<\alpha_1< \infty$,  
\begin{equation}\label{4.10}
(m_\mathrm{eff})^{-1} = \sigma^2 >0.
\end{equation}
\end{theorem}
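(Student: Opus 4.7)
The plan is to repeat the argument of Theorem \ref{corollary-polaron-type} almost verbatim, with Lemma \ref{Froehlich-theorem} replacing Lemma \ref{Jacob} on the spectral side and with the Mukherjee--Varadhan CLT (\ref{central-limit-Froehlich}) providing the probabilistic input. The restriction $\alpha\in[0,\alpha_0)\cup(\alpha_1,\infty)$ in the statement is inherited entirely from \cite{MV18}; the spectral side of the argument works unchanged for every coupling.

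First I would check that $0\in\mathcal{I}_0$: by Lemma \ref{Froehlich-theorem}(1) we have $E_{\mrm{ess}}(0)=E(0)+1>E(0)$, so parts 3, 5, 6 of the same lemma apply on an open neighbourhood of $P=0$. In particular the ground state $\psi_0$ of $H^{\mrm{Fr}}(0)$ exists and is unique, $\langle\Omega,\psi_0\rangle\neq 0$, and $P\mapsto E(P)$ is real analytic at the origin, so rotation invariance gives $E(\epsilon k)-E(0)=\tfrac{1}{2}\epsilon^{2}k^{2}E_{\mrm{r}}''(0)+O(\epsilon^{4})$. Then I would start from
\[
G_{0,0}(\epsilon k,\epsilon^{-2}t) \;=\; \frac{\langle\Omega,\,\mathrm{e}^{-\frac{t}{\epsilon^{2}}(H^{\mrm{Fr}}(\epsilon k)-E(0))}\Omega\rangle}{\langle\Omega,\,\mathrm{e}^{-\frac{t}{\epsilon^{2}}(H^{\mrm{Fr}}(0)-E(0))}\Omega\rangle},
\]
note that the denominator converges to $|\langle\Omega,\psi_0\rangle|^{2}$ by the spectral theorem, and decompose the numerator via the ground-state projection exactly as in (\ref{leading-term})--(\ref{sub-leading}). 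The leading piece equals $|\langle\Omega,\psi_{\epsilon k}\rangle|^{2}\,\mathrm{e}^{-\frac{t}{\epsilon^{2}}(E(\epsilon k)-E(0))}$ and converges to $|\langle\Omega,\psi_0\rangle|^{2}\,\mathrm{e}^{-\frac{1}{2}tk^{2}E_{\mrm{r}}''(0)}$ using norm-continuity of $P\mapsto\psi_P$ on $\mathcal{I}_0$ and the Taylor expansion above, while the orthogonal piece is suppressed by the spectral gap $E_{\mrm{ess}}(\epsilon k)-E(\epsilon k)$; this gap is especially easy to control here, since by Lemma \ref{Froehlich-theorem}(1) $E_{\mrm{ess}}(\epsilon k)$ is the constant $E(0)+1$. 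Taking the ratio gives $\lim_{\epsilon\to 0}G_{0,0}(\epsilon k,\epsilon^{-2}t)=\mathrm{e}^{-\frac{1}{2}tk^{2}E_{\mrm{r}}''(0)}$, and matching with the right-hand side $\mathrm{e}^{-\frac{1}{2}\sigma^{2}k^{2}t}$ of (\ref{central-limit-Froehlich}) forces $E_{\mrm{r}}''(0)=\sigma^{2}>0$, i.e.\ $(m_{\mrm{eff}})^{-1}=\sigma^{2}>0$.

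No genuinely new estimate is needed beyond what is already assembled in Lemma \ref{Froehlich-theorem}. The only step that required real work is part 6 of that lemma, i.e.\ the analyticity of $P\mapsto E(P)$ at $P=0$ for the renormalised Hamiltonian, which rests on the material in Appendix \ref{holomorphy}; once this is granted, the argument is a routine mirror of the proof of Theorem \ref{corollary-polaron-type}. The main conceptual obstacle to extending the conclusion to all $\alpha\geq 0$ is therefore not spectral but probabilistic: one would need the CLT (\ref{central-limit-Froehlich}) throughout the intermediate range $[\alpha_0,\alpha_1]$, which is currently open.
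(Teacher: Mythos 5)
Your proposal is correct and matches the paper's own proof, which is stated in exactly one sentence: repeat the argument of Theorem~\ref{corollary-polaron-type} with Lemma~\ref{Froehlich-theorem} substituted for Lemma~\ref{Jacob}. You have simply unpacked that sentence, correctly identifying that the $\alpha$-restriction comes entirely from the Mukherjee--Varadhan CLT and that the only genuinely new spectral ingredient is the analyticity of $P\mapsto E(P)$ at $P=0$ established via Appendix~\ref{holomorphy}.
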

\begin{proof} The claim follows from the CLT stated in (\ref{central-limit-Froehlich}) by the same
steps as in the proof of Theorem~\ref{corollary-polaron-type}. Instead of Lemma~\ref{Jacob}, Lemma~\ref{Froehlich-theorem} is used. 
\end{proof}

 \section{A CLT for two-sided pinning}\label{General-analysis}
\setcounter{equation}{0}

We return to the set-up  of equation \eqref{2.1} with square-integrable boundary functions $\varphi_\pm = \varphi$, $T_\pm = T$, and $T \to \infty$. 
A probabilistic study of this variant does not seem to be available in the literature and we focus on the functional analytic side. It will be more transparent to work in a  general framework,
which includes the polaron models discussed so far, but many more, e.g. systems with a non-quadratic energy momentum relation for the electron. 

Let $\hil$, $\mcFn$ be Hilbert spaces and $\Pi: \hil\to L^2(\real^d, \mcFn)$  a unitary.   For any $\Phi\in \hil$ 
we have the corresponding representation $\Phi=\Pi^*\int^{\oplus}_{\real^d} \mrm{d}P\,\Phi_P$. For any $k\in \real^d$
we define the unitary $U(k)$ by its action on such vectors $\phi$
\beqa
U(k)\Phi= \Pi^*\int^{\oplus}_{\real^d} \mrm{d}P\,\Phi_{P+k}.
\eeqa
Furthermore, we are interested in self-adjoint operators $H$ on a domain  $D(H)\subset \hil$ which have the representation
\beqa
H=\Pi^{*} (\int^{\oplus}_{\real^d} \mrm{d}P\, H(P)) \Pi. \label{fiber-decomposition}
\eeqa
Here $\real^d\ni P\mapsto H(P)$ is a real analytic family of positive operators with domains $D(H(P))\subset \mcFn$, as stated more precisely in the standing assumption 0 below.
 Furthermore, we note that for any bounded Borel function~$f$ 
\beqa
U(k)f(H)U(k)^*=\Pi^{*} (\int^{\oplus}_{\real^d} \mrm{d}P\, f(H(P+k)))\Pi. \label{A0}
\eeqa

In this section we impose the following \textbf{standing assumptions}:
\begin{enumerate}

\item[0.]  The family $P\mapsto H(P)$ is real analytic in the sense that for any $P_0\in \real^d$ and  any $\xi\notin \mrm{spec}(H(P_0))$ there 
exists a real neighbourhood $N_{P_0}$ of $P_0$ s.t.   $\xi\notin \mrm{spec}(H(P))$   for any $P\in N_{P_0}$ and $N_{P_0}\ni P\mapsto (H(P)-\xi)^{-1}$
is real analytic.  As a consequence, for any $\hat{P}$ on the unit sphere $|P|\mapsto H(\hat{P} |P|)$ is a real analytic family of unbounded operators in the sense of  \cite[Chapter VII, \S 1]{Ka}.
Another consequence of this property and of the {\rc Helfer-Sj\"ostrand method of almost analytic extensions} 
is the strong continuity of  $\real^d\ni P\mapsto \e^{-tH(P)}$, which will be used in the proofs below.

\item[1.]   The function $P\mapsto E(P):=\mathrm{infspec} (H(P))$ is rotation invariant and we write as before $E(P)=E_{\mrm{r}}(|P|) $. $E$ attains its global minima in the sets
$M_{\ell}=\{\,P\in \real^d\,|\, |P|=Q_{\ell}\, \}$, $\ell=0,1,2\ldots, L$, where $0\leq Q_0<Q_1<\ldots< Q_{L}$ and $L$ finite. Also, we assume $E_{\mrm{r}}(Q_{\ell} )=0$.

\item[2.]  $E$ is analytic in sets $\ti{M}_{\ell}=\{\,P\in \real^d\,|\, |P|\in \De_{\ell}\, \}$, where $\De_{\ell}$ is a neighbourhood of $Q_{\ell}$.
 Then we have  $E_{\mrm{r}}(  {\rc Q_{\ell}+R})\sim R^{n_{\ell}}$ for small $R$ and some $n_{\ell}\in \nat$, $n_{\ell}\geq 2$.

\item[3.]  For $P\in \ti{M}_{\ell}$,  $E(P)$ are simple eigenvalues and the corresponding family of projections $\ti{M}_{\ell}\ni P\mapsto |\Psi_{P}\ran \lan \Psi_P|$ 
is strongly continuous. (It easily follows that $P\mapsto \Psi_P$ can be chosen strongly continuous in $\ti{M}_{\ell}$ by a suitable choice of the phases, which is what will be used below).

\item[4.] There exist vectors  $\phi\in \hil$ such that $\ti{M}_{\ell}\ni P\mapsto |\lan \phi_P, \psi_P\ran|$ are continuous and non-zero on $M_{\ell}$.

\end{enumerate}
The above assumptions hold, in particular, for models of Sections~\ref{polaron-type-section} and \ref{Froehlich-section} as shown in
the following proposition. The proof is postponed to Appendix~\ref{proposition}.
\begin{proposition}\label{corollary}  For the polaron-type models (\ref{fiber-hamiltonians}) satisfying  Condition C and the Fr\"ohlich polaron~(\ref{cut-off-Hamiltonians}) the following properties hold true:
\begin{enumerate}
\item[(a)]  The models  satisfy the standing assumptions 0,1,2,3
above. 
\item[(b)] Let  $\phi=\varphi \otimes\Omega\in \hil$ be s.t. $\varphi\in L^2(\real^d)$, $\hat{\varphi}\in C(\real^d)$ and $\hat{\varphi}(p)>0$ for  all $p\in \mathcal{I}_0$. 
For such $\phi$ assumption 4 above holds. 
\end{enumerate}
\end{proposition}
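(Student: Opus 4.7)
The plan is to verify the standing assumptions 0--4 by a direct transcription of Lemmas~\ref{Jacob} and~\ref{Froehlich-theorem}, handling both model families in parallel. After shifting each $H(P)$ by the constant $E(0)$ we obtain a positive operator with $E_{\mrm{r}}(Q_\ell)=0$ at each global minimum, as required. Assumption 0 is immediate for the UV-cut-off polaron, since $H(P)$ is polynomial in $P$ through the $\h(P-P_{\mrm{f}})^2$ term, so a Neumann series expansion of $(H(P)-\xi)^{-1}$ around any $P_0\in\real^d$ converges in norm on a resolvent neighbourhood and depends real analytically on $P$; for the Fr\"ohlich polaron the same statement is obtained via the Gross transform, which is exactly the content of Appendix~\ref{holomorphy}. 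Strong continuity of $P\mapsto\e^{-tH(P)}$ is then standard via the Helffer-Sj\"ostrand functional calculus applied to the resolvent.

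Assumptions 1 and 2 reduce to showing finitely many minimising spheres and extracting a Taylor power at each of them. Rotation invariance puts the minima on spheres $M_\ell=\{|P|=Q_\ell\}$. Compactness of the union of minima follows from part 4 of the relevant lemma: for polaron-type models, either $\omega$ is unbounded and $E(P)\geq c_1\omega(P)+c_2\to\infty$, or $\omega$ is bounded and the massive condition $E_{\mrm{ess}}(P)\geq E(0)+c_0$ combined with part 4 of Lemma~\ref{Jacob} prevents minimising sequences from escaping to infinity; for the Fr\"ohlich polaron this is part 4 of Lemma~\ref{Froehlich-theorem} directly. If the $Q_\ell$ accumulated at some $Q^{*}<\infty$, continuity of $E_{\mrm{r}}$ would make $Q^{*}$ itself a minimum, hence place $Q^{*}$ into the interior of $\mathcal{I}_0$ by part 2, and the real analyticity from part 6 would force $E_{\mrm{r}}\equiv E(0)$ on an interval around $Q^{*}$; the resulting positive Lebesgue measure of $\{P:E(P)=E(0)\}$ would produce an $H$-ground state, which is excluded for these models as recalled in Section~\ref{sec1}. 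Finiteness in hand, part 6 applied on $\ti M_\ell$ yields the Taylor expansion $E_{\mrm{r}}(Q_\ell+R)=c_\ell R^{n_\ell}+\mathcal{O}(R^{n_\ell+1})$ with $c_\ell>0$ and $n_\ell\geq 2$, which is assumption~2.

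Assumption 3 is immediate: simplicity and isolation of $E(P)$ for $P\in\mathcal{I}_0$ are part 3 of both lemmas, and strong continuity of the rank-one eigenprojection is part 6, from which a strongly continuous choice of $\psi_P$ is obtained by fixing the phase via $\langle\Omega,\psi_P\rangle>0$ on each connected component of $\mathcal{I}_0$ that meets some $M_\ell$ (using part 5 to guarantee non-vanishing). For part (b), the identity $P_{\mrm{f}}\Omega=0$ together with $\Pi=F\e^{\mrm{i}P_{\mrm{f}}x}$ give $\Pi\phi=\hat\varphi\otimes\Omega$, i.e.\ $\phi_P=\hat\varphi(P)\,\Omega$, so
\begin{equation}
\langle\phi_P,\psi_P\rangle=\overline{\hat\varphi(P)}\,\langle\Omega,\psi_P\rangle.
\end{equation}
Continuity on $\ti M_\ell$ follows from continuity of $\hat\varphi$ and of $\psi_P$, while non-vanishing on $M_\ell\subset\mathcal{I}_0$ combines $\hat\varphi>0$ on $\mathcal{I}_0$ (hypothesis on $\varphi$) with $\langle\Omega,\psi_P\rangle\neq 0$ from part 5 of the lemmas. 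I expect the main obstacle to be the accumulation step for the spheres $M_\ell$: it leans on the external input that $H$ has no ground state at $E(0)$, and requires some care in distinguishing interior accumulation from accumulation at the boundary of $\mathcal{I}_0$.
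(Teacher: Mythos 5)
Your verification of assumptions 0, 3 and of part (b) essentially reproduces the paper's own argument: Neumann series around $P_0$ / the Gross transform for standing assumption 0, phase-fixing via $\lan\Om,\psi_P\ran>0$ on $\mathcal{I}_0$ for assumption 3, and the computation $\phi_P=\hat\varphi(P)\Om$ together with part 5 of the lemmas for part (b) all match. The worry you raise in your last sentence about accumulation at the boundary of $\mathcal{I}_0$ is in fact already resolved by your own observation that any accumulation point $Q^{*}$ is itself a global minimum by continuity and therefore lies in the open set $\mathcal{I}_0$ by part 2.

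The genuine gap is the final step of the finiteness argument. You correctly deduce, as the paper does, that an accumulation of minimizing spheres at $Q^{*}$ forces $E_{\mrm{r}}\equiv E(0)$ on an interval around $Q^{*}$, but you then close the argument by invoking ``positive Lebesgue measure of $\{P:E(P)=E(0)\}$ would produce an $H$-ground state, which is excluded.'' That fact is not available here: the absence of a ground state for $H$ is asserted heuristically in Section 1 but never proved in the paper, and it is logically equivalent to the claim that $\{P:E(P)=E(0)\}$ has Lebesgue measure zero, i.e.\ a strengthening of the very property under analysis (compare Theorem~\ref{main-general-theorem}(b), where $E(P)>E(0)$ is derived only conditionally on Conjecture~\ref{CLT-conjecture}). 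Importing it into the unconditional Proposition~\ref{corollary} would be circular. The paper instead closes the loop self-containedly: once $E_{\mrm{r}}\equiv E(0)$ on a neighbourhood of $Q^{*}$, it propagates this by unique analytic continuation (every $Q$ with $E_{\mrm{r}}(Q)=E(0)$ lies in $\mathcal{I}_0$, where $E$ is analytic, so the set of such $Q$ admitting a neighbourhood with $E_{\mrm{r}}\equiv E(0)$ is open and closed in $[0,\infty)$) to obtain $E_{\mrm{r}}\equiv E(0)$ on all of $[0,\infty)$, and this contradicts part 4 of Lemma~\ref{Jacob} (resp.\ Lemma~\ref{Froehlich-theorem}), which confines all global minima to a compact set. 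Replacing your appeal to ``no ground state'' by this globalization-then-compactness argument completes the proof.
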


Coming back to the general framework, we note  that assumptions 2, 3 follow from 0,1 and  analytic perturbation theory \cite[Chapter VII, \S 3]{Ka} if $E(P)$ are simple, isolated eigenvalues for $P\in \ti{M}_{\ell}$. However, our discussion in this section does not require spectral gaps above $E(P)$. Also 
the standard relation $E(0)\leq E(P)$, $P\in \real^d$, for polaron type models, cf. Lemmas~\ref{Jacob}, \ref{Froehlich-theorem}, does not follow from the standing assumptions above. However,  with  additional input which we now explain, we will obtain not only this relation, but even $E(0)<E(P)$, $P\neq 0$, for $d\geq 2$.    

For the two-sided boundary condition the properly  normalized  characteristic  function reads 
\beqa
\ti{G}_{T}(k,t):= \lan \phi, \e^{-TH} U(k)\e^{-tH}U(k)^*\e^{-TH}\phi\ran, \quad G_{T}(k,t) := \tilde{G}_T(k,t)/\tilde{G}_T(0,t)  \label{A}
\eeqa
for $\phi\in \hil$ and $t,T\geq 0$.  By  the spectral theorem, the denominator above is different from zero for any finite $T$.
Furthermore, for $\phi$ as in assumption 4, the limits 
\beqa
G_{\infty}(k,t):=\lim_{T\to \infty} G_{T}(k,t) \quad\mrm{and}\quad  \lim_{\epsilon \to 0} G_{\infty}(\epsilon k, \epsilon^{-2}t)
\eeqa
exist under our standing assumptions. The explicit expressions  are provided in  Proposition~\ref{main-proposition}  and Lemma~\ref{epsilon-limit} below.
 We expect that the latter limit has the form suggested by the CLT.
\begin{con}\label{CLT-conjecture} 
There exists  $\phi\in \hil$ as in assumption 4 above, such that the CLT of the form
\begin{equation}\label{implicit-CLT}
\lim_{\epsilon \to 0} G_{\infty}(\epsilon k, \epsilon^{-2}t) =  \mathrm{e}^{-\frac{1}{2} \sigma^2 k^2 t}
\end{equation}
holds true for some $\si>0$.
\end{con}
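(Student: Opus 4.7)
The plan is to evaluate $G_\infty(k,t)$ semi-explicitly via a Laplace asymptotic in the fiber momentum $P$, and then to push the diffusive rescaling $\eps\to 0$ through the same spectral-projection argument as in the proof of Theorem~\ref{corollary-polaron-type}. I would first use the fiber decomposition~(\ref{fiber-decomposition}) together with $\phi=\varphi\otimes\Omega$ to rewrite
\[
\ti G_T(k,t) \;=\; \int_{\real^d}\mrm{d}P\, |\hat\varphi(P)|^2\, \e^{-2TE(P)}\, F_T(P,k,t),
\]
with $F_T(P,k,t):=\lan\Omega,\,\e^{-T(H(P)-E(P))}\,\e^{-tH(P+k)}\,\e^{-T(H(P)-E(P))}\Omega\ran$. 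By standing assumption~3, $\e^{-T(H(P)-E(P))}\to |\psi_P\ran\lan\psi_P|$ strongly on each $\ti M_\ell$ as $T\to\infty$, and strong continuity of $P\mapsto \e^{-tH(P)}$ yields the pointwise limit $F_T\to |\lan\Omega,\psi_P\ran|^2\,\lan\psi_P,\e^{-tH(P+k)}\psi_P\ran$ on the minimum set.

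The prefactor $\e^{-2TE(P)}$ then concentrates the integral on $\bigcup_\ell M_\ell$. Around each sphere I would introduce polar coordinates $P=r\hat P$ and use $E_{\mrm{r}}(Q_\ell+R)\sim c_\ell R^{n_\ell}$ to carry out Laplace's method in the radial variable. Rotation invariance makes $\alpha_\ell := |\hat\varphi(Q_\ell\hat P)|^2\,|\lan\Omega,\psi_{Q_\ell\hat P}\ran|^2$ independent of $\hat P$, and, assuming $\psi_P$ normalized,
\[
G_\infty(k,t) \;=\; \frac{\sum_\ell w_\ell\, \alpha_\ell \int_{S^{d-1}} \mrm{d}\hat P\, \lan \psi_{Q_\ell\hat P},\, \e^{-tH(Q_\ell\hat P+k)}\psi_{Q_\ell\hat P}\ran}{\sum_\ell w_\ell\, \alpha_\ell\,|S^{d-1}|},
\]
with Laplace weights $w_\ell\geq 0$ set by the $n_\ell$ (only minimal $n_\ell$ contributing in case of mismatch) and the point $Q_0=0$ entering, if present, in the degenerate sense of a $\delta$-concentration.

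For the diffusive limit on each sphere I would decompose $\psi_{Q_\ell\hat P}$ with respect to the simple ground-state projection of $H(Q_\ell\hat P+\eps k)$ and use real analyticity of $E$ on $\ti M_\ell$ together with $\nabla E(Q_\ell\hat P)=0$ and $\nabla^2 E(Q_\ell\hat P) = E_{\mrm{r}}''(Q_\ell)\,\hat P\hat P^{\top}$, forced by rotation invariance, to conclude
\[
\lan \psi_{Q_\ell\hat P},\, \e^{-\eps^{-2}tH(Q_\ell\hat P+\eps k)}\psi_{Q_\ell\hat P}\ran \;\longrightarrow\; \e^{-\h t E_{\mrm{r}}''(Q_\ell)(\hat P\cdot k)^2}.
\]
The orthogonal-complement contribution is harmless because $H(P)-E(P)\geq 0$ makes the relevant semigroup a contraction, while strong continuity of $P\mapsto \psi_P$ forces the orthogonal projection amplitude to vanish in norm. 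If the minimum set reduces to $\{0\}$, the expression collapses to $\e^{-\h t E_{\mrm{r}}''(0) k^2}$, producing the CLT~(\ref{implicit-CLT}) with $\sigma^2 = E_{\mrm{r}}''(0) = 1/m_{\mrm{eff}}$ and tying the CLT variance to the inverse effective mass.

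The main obstacle is structural: in $d\geq 2$ the angular integral $\int_{S^{d-1}}\mrm{d}\hat P\,\e^{-\h t E_{\mrm{r}}''(Q_\ell)(\hat P\cdot k)^2}$ is \emph{not} of Gaussian form in $k$ whenever $Q_\ell>0$, so the CLT form actually forces the minimum set of $E$ to be precisely $\{0\}$. Proving Conjecture~\ref{CLT-conjecture} in dimension $d\geq 2$ is therefore essentially equivalent to the open uniqueness statement~(\ref{1.12}), $E(P)>E(0)$ for $P\neq 0$, and appears to require genuinely new input beyond spectral theory — either large-deviation bounds for the polaron path measure, or an extension of the compactness arguments of~\cite{MV18,M19} to the two-sided pinning set-up.
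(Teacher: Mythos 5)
The statement you were asked to prove is labeled a \emph{Conjecture} in the paper, and the paper does not offer a proof of it; rather it devotes Section~\ref{General-analysis} to deriving \emph{consequences} of the conjecture (Theorem~\ref{main-general-theorem}), and the spectral-theoretic machinery you lay out is precisely the paper's own Proposition~\ref{main-proposition} and Lemma~\ref{epsilon-limit}. So in a narrow sense there is no ``paper's proof'' to compare against, and the most important thing to say is that you correctly recognized this: your final paragraph, which observes that the Gaussian form of \eqref{implicit-CLT} would be violated by any angular average over a sphere $|P|=Q_\ell>0$, is exactly the observation underlying the paper's proof that the conjecture \emph{implies} uniqueness of the global minimum ((b) of Theorem~\ref{main-general-theorem}). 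Your derivation of the Laplace-type formula for $G_\infty(k,t)$, via concentration of $\e^{-2TE(P)}$ on $\bigcup_\ell M_\ell$ together with $\e^{-T(H(P)-E(P))}\to|\psi_P\rangle\langle\psi_P|$ and strong continuity of $P\mapsto\e^{-tH(P)}$, matches Proposition~\ref{main-proposition}, and your shift-and-contraction treatment of the diffusive limit on each sphere matches Lemma~\ref{epsilon-limit}.

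Two small refinements are worth noting. First, ``essentially equivalent to (\ref{1.12})'' is a touch too strong: even granting uniqueness of the global minimum at $P=0$, Proposition~\ref{main-proposition}(1) and Lemma~\ref{epsilon-limit} only yield $\lim_{\eps\to0}G_\infty(\eps k,\eps^{-2}t)=\e^{-\frac12 t k^2 E_{\mrm{r}}''(0)}$, so the conjecture additionally requires strict positivity $E_{\mrm{r}}''(0)>0$, which is not automatic from the spectral input. Second, when $Q_0=0$ competes with spheres $Q_\ell>0$, the radial Laplace weight at the origin scales as $(2T+t)^{-d/n_0}$ rather than $(2T+t)^{-1/n_0}$ (due to the $|P|^{d-1}$ volume factor), which is why the paper's case split (a)/(b)/(c) compares $n$ with $n_0/d$; your description of the weights $w_\ell$ should account for that factor to reproduce Proposition~\ref{main-proposition}(2) correctly. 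With those caveats, your analysis is sound, and your identification of the obstruction — that proving the conjecture requires either probabilistic input (a two-sided-pinning CLT in the spirit of \cite{MV18,M19}, or a large-deviation estimate) or an independent proof of $E(P)>E(0)$ for $P\neq0$ — reflects precisely why the paper leaves this as a conjecture.
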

The consequences of this conjecture for models satisfying the above standing assumptions are collected in the
following theorem.
\begin{theorem} \label{main-general-theorem} Suppose that Conjecture~\ref{CLT-conjecture} holds true for some $\phi\in \hil$
as in assumption~4 and $\si>0$.
Then, for $d\geq 2$, there is a  global minimum at zero (i.e. $Q_0=0$). Furthermore,
\begin{enumerate}
\item[(a)] $\si^2=(\pa^2_{|P|} E_{\mrm{r}})(0)$,
\item[(b)]  $E(P)>E(0)$ for $P\neq 0$.
\end{enumerate}
For $d=1$ we obtain that $\si^2=(\pa^2_{|P|} E_{\mrm{r}})(Q_{\ell})$ for $\ell=0,1,2,\ldots L$ and $0\leq Q_0<Q_1<\ldots <~Q_L$.
\end{theorem}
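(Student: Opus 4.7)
Combining the fiber decomposition (\ref{fiber-decomposition}) with the covariance relation (\ref{A0}) gives
\begin{equation*}
\ti G_T(k,t)=\int_{\real^d}dP\,\langle\phi_P,e^{-TH(P)}e^{-tH(P+k)}e^{-TH(P)}\phi_P\rangle,
\end{equation*}
and the strategy is to compute $G_\infty(k,t)$ explicitly from this, then pass to the scaling limit as an integral over the minimum set $M=\bigcup_\ell M_\ell$, and finally match with the Gaussian form of the CLT. Splitting $e^{-TH(P)}\phi_P=e^{-TE(P)}\langle\psi_P,\phi_P\rangle\psi_P+R_T(P)$ with $R_T(P)\perp\psi_P$, the normalized integrand concentrates as $T\to\infty$ on $M$ because $E\geq 0$. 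A Laplace-type expansion transverse to each $M_\ell$, using $E_r(Q_\ell+R)\sim R^{n_\ell}$, yields a normal-direction factor of order $T^{-1/n_\ell}$ for $Q_\ell>0$ and of order $T^{-d/n_\ell}$ for $Q_\ell=0$; after normalization only the $M_\ell$'s with the slowest decay survive, giving
\begin{equation*}
G_\infty(k,t)=\int_{M}\langle\psi_P,e^{-tH(P+k)}\psi_P\rangle\,d\mu(P),
\end{equation*}
for a probability measure $\mu$ supported on those dominant components with density (up to normalization) $|\langle\psi_P,\phi_P\rangle|^2$ times the surface measure, picking up a factor $Q_\ell^{d-1}$ on spheres. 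Consistency: $\langle\psi_P,e^{-tH(P)}\psi_P\rangle=1$ on $M$, so $G_\infty(0,t)=1$.

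Next, I evaluate the pointwise scaling limit. For $P\in M_\ell$, the spectral decomposition of $H(P+\eps k)$ together with the continuity $\psi_{P+\eps k}\to\psi_P$ (standing assumption~3) gives
\begin{equation*}
\langle\psi_P,e^{-\eps^{-2}tH(P+\eps k)}\psi_P\rangle=e^{-\eps^{-2}tE(P+\eps k)}|\langle\psi_P,\psi_{P+\eps k}\rangle|^2+\rho_\eps,
\end{equation*}
with $|\rho_\eps|\leq\sqrt{1-|\langle\psi_P,\psi_{P+\eps k}\rangle|^2}\to 0$. Inserting $|P+\eps k|-Q_\ell=\eps(\hat P\cdot k)+O(\eps^2)$ into $E_r(Q_\ell+R)\sim R^{n_\ell}$ gives a pointwise limit $e^{-L(P,k,t)}$ with $L=\tfrac{t}{2}(\pa_{|P|}^2E_r)(Q_\ell)(\hat P\cdot k)^2$ when $Q_\ell>0$, $n_\ell=2$; with $L=\tfrac{t}{2}(\pa_{|P|}^2E_r)(0)|k|^2$ when $Q_0=0$, $n_0=2$; and with $L=0$ when $n_\ell>2$. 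Dominated convergence then yields $\lim_{\eps\to 0}G_\infty(\eps k,\eps^{-2}t)=\mathbb E_\mu[e^{-L(P,k,t)}]$, which by hypothesis equals $e^{-\frac12\sigma^2|k|^2t}$.

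For $d\geq 2$, a contribution to $\mathbb E_\mu[e^{-L}]$ from a positive-radius sphere with $n_\ell=2$ is of the form $\int_{M_\ell}h(\hat P)e^{-\frac{t}{2}c_\ell(\hat P\cdot k)^2}d\hat P$ with $h\geq 0$, $\int h>0$, which has only polynomial decay in $|k|$ (the large-$|k|$ contribution concentrates near $\hat P\cdot k\approx 0$) and hence is not Gaussian in $k$; a component with $n_\ell>2$ contributes the constant $1$, incompatible with $\sigma>0$. Moreover, the Laplace analysis of step one shows that in $d\geq 2$ a positive-radius minimum always carries positive $\mu$-mass when present (its exponent $T^{-1/n_\ell}$ dominates or ties the origin's $T^{-d/n_0}$), so its mere presence prevents $\mathbb E_\mu[e^{-L}]$ from being a pure Gaussian. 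Hence no positive-radius global minimum can exist: $M=\{0\}$, $Q_0=0$ is the unique global minimum giving (b), Gaussianity of the limit forces $n_0=2$, and matching the single remaining contribution yields (a), $\sigma^2=(\pa_{|P|}^2E_r)(0)$. For $d=1$, spheres reduce to $\{\pm Q_\ell\}$, so $(\hat P\cdot k)^2=k^2$ and each $n_\ell=2$ minimum contributes the single Gaussian $e^{-\frac{t}{2}(\pa_{|P|}^2E_r)(Q_\ell)k^2}$; $n_\ell>2$ components contribute $1$ and are ruled out as before. The identity $\sum_\ell w_\ell e^{-\frac{t}{2}(\pa_{|P|}^2E_r)(Q_\ell)k^2}=e^{-\frac12\sigma^2 k^2 t}$ for all $k,t$ then forces $(\pa_{|P|}^2E_r)(Q_\ell)=\sigma^2$ for every $\ell$.

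The main obstacle is the first step: in the absence of a spectral gap above $E(P)$, one must uniformly control $R_T(P)$ so as to justify discarding cross-terms in the Laplace analysis of $\ti G_T$ as $T\to\infty$, and correctly identify the $T$-exponent and limiting weight $\mu$ attached to each $M_\ell$. Once this is in hand, the pointwise scaling limit and the algebraic matching with the Gaussian ansatz are comparatively straightforward.
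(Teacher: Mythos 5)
Your overall approach matches the paper's: a Laplace-type analysis of the $T\to\infty$ limit to identify a measure $\mu$ on the dominant components of the minimum set (this is the content of the paper's Proposition~\ref{main-proposition}), a pointwise scaling limit for the matrix elements $\lan\Psi_P,\e^{-tH(P+k)}\Psi_P\ran$ (the paper's Lemma~\ref{epsilon-limit}), and finally a contradiction derived by comparing $\mathbb E_\mu[\e^{-L}]$ with the Gaussian $\e^{-\frac12\sigma^2k^2t}$, using rotational averaging. The conclusion that a spherical integral $\int_{S^{d-1}}h(\hat P)\e^{-\frac{t}{2}c_\ell(\hat P\cdot k)^2}d\hat P$ has only polynomial decay in $|k|$ for $d\geq2$ is equivalent in content to the paper's step of letting $x^2:=tk^2\sigma^2/2\to\infty$ and observing that the integrand is bounded below by $1$ on a set of positive angular measure.

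However, there is a genuine gap in the $d\geq2$ argument: your parenthetical claim that for $d\geq2$ a positive-radius minimum always carries positive $\mu$-mass, because ``its exponent $T^{-1/n_\ell}$ dominates or ties the origin's $T^{-d/n_0}$,'' is false. Positive mass at $Q_\ell>0$ requires $1/n_\ell\leq d/n_0$, i.e.\ $n_0\leq dn_\ell$, and there is no reason $n_0$ could not exceed $dn_\ell$ --- e.g.\ $d=2$, $n_\ell=2$, $n_0=6$ is perfectly compatible with the standing assumptions. This is exactly the case $n<n_0/d$ (case~(c) of Proposition~\ref{main-proposition}), in which the origin captures \emph{all} the mass and the positive-radius spheres drop out. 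Your stated reasoning (``positive-radius carries mass $\Rightarrow$ contradiction, and it always carries mass, hence no positive-radius minimum'') therefore does not close. The repair is the one the paper makes explicit: in the cases $n=n_0/d$ and $n<n_0/d$, the origin carries mass, and since $d\geq2$ and $n\geq2$ force $n_0=nd>2$ there, one has $(\pa^2_{|P|}E_{\mrm{r}})(0)=0$; the origin's contribution is then the constant $1$, which again contradicts $\e^{-\frac12\sigma^2k^2t}$ with $\sigma>0$. In other words, you need a three-way case distinction on whether the origin dominates, ties, or is dominated --- not a blanket assertion that the sphere always wins or ties. Your $d=1$ argument is essentially correct, though its appeal to ``ruled out as before'' inherits the imprecision noted above and should be made self-contained.
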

  We stated a minimal conjecture as required for Theorem \ref{main-general-theorem} to hold. In fact, the CLT should be in force for a large set of boundary functions, e.g. those satisfying 
4. of the standing assumptions. Our theorem then asserts that the diffusion constant is always given by $\si^2=(\pa^2_{|P|} E_{\mrm{r}})(0)$.

The observation behind Theorem \ref{main-general-theorem} is fairly elementary and can be grasped most easily for the polaron models underlying  \eqref{2.1a}. We note that the $P$-integral has the weight
$ |\hat{\varphi}(P)|^2  > 0$ for $\varphi$ as in {Proposition~\ref{corollary}(b). Thus in the limit $T \to \infty$ the $P$-integral concentrates on the set 
of global minima $\{P \in \real^d\,|\, E(P) = E(0) \}$.  If the CLT would hold,
the limit expression must have come only from $P=0$ and hence $E(P) > E(0)$ for $P\neq0$. 

The actual proof is more involved  and the remaining part of this section is devoted to proving Theorem~\ref{main-general-theorem}.
We start with two auxiliary results, which do not rely on Conjecture~\ref{CLT-conjecture}.
\begin{proposition} \label{main-proposition} The following statements hold:
\begin{enumerate}
\item If $Q_0=0$ and it is the  only global minimum of $E$, then
\beqa
\lim_{T\to \infty}  G_{T}(k,t)     =\lan \Psi_0, \e^{-tH(k)}   \Psi_0\ran. \label{one-minimum-formula}
\eeqa 
\item  If  $Q_0=0$ and there are other global minima at $Q_{\ell}> 0$, $\ell=1,2, \ldots, L$, we set $n:=\mrm{max}_{\ell\neq 0}(n_{\ell})$ {\rc (see assumption 2)}  and distinguish the following cases:  
\begin{enumerate}

\item[(a)] For $n>\fr{n_0}{d}$
\beqa
\lim_{T\to\infty}   G_{T}(k,t)   =
\fr{ \sum_{\ell}C_{\ell} \int \mrm{d}\Om(\hat{P}) | \lan \Phi_{Q_{\ell}\hP},\Psi_{Q_{\ell}\hP}\ran|^2 \lan \Psi_{Q_{\ell}\hP}, \e^{-tH({Q_{\ell}\hP}+k}) \Psi_{Q_{\ell}\hP}\ran}
{\sum_{\ell}C_{\ell} \int \mrm{d}\Om(\hat{P}) | \lan \Phi_{Q_{\ell}\hP},\Psi_{Q_{\ell}\hP}\ran|^2  }, \label{a}
\eeqa
where  $C_{\ell}\neq 0$ and the sums extend only over $\ell>0$ s.t.   $n_{\ell}=n$.

\item[(b)] For $n=\fr{n_0}{d}$
\begin{align}
 &\lim_{T\to\infty}   G_{T}(k,t)  \non\\
 &=\fr{c_{0}  | \lan \Phi_{0},\Psi_{0}\ran|^2  \lan \Psi_0, \e^{-tH(k)}   \Psi_0\ran+ \sum_{\ell}c_{\ell} \int \mrm{d}\Om(\hat{P}) | \lan \Phi_{Q_{\ell}\hP},\Psi_{Q_{\ell}\hP}\ran|^2 \lan \Psi_{Q_{\ell}\hP}, \e^{-tH({Q_{\ell}\hP}+k}) \Psi_{Q_{\ell}\hP}\ran}
{ c_{0}  | \lan \Phi_{0},\Psi_{0}\ran|^2+  \sum_{\ell}c_{\ell} \int \mrm{d}\Om(\hat{P}) | \lan \Phi_{Q_{\ell}\hP},\Psi_{Q_{\ell}\hP}\ran|^2  },\non\\ \label{b}
\end{align}
where $c_0,c_{\ell}\neq 0$ and the sums extend only over $\ell>0$ s.t.   $n_{\ell}=n$.

\item[(c)] For $n< \fr{n_0}{d}$
\beqa
\lim_{T\to \infty}  G_{T}(k,t)     =\lan \Psi_0, \e^{-tH(k)}   \Psi_0\ran. \label{c}
\eeqa 

\end{enumerate}

\item If $0<Q_0<Q_1<\ldots <Q_L$, for $L\geq 0$,  we obtain 
\beqa
\lim_{T\to\infty}   G_{T}(k,t)   =
\fr{ \sum_{\ell}C_{\ell} \int \mrm{d}\Om(\hat{P}) | \lan \Phi_{Q_{\ell}\hP},\Psi_{Q_{\ell}\hP}\ran|^2 \lan \Psi_{Q_{\ell}\hP}, \e^{-tH({Q_{\ell}\hP}+k}) \Psi_{Q_{\ell}\hP}\ran}
{\sum_{\ell}C_{\ell} \int \mrm{d}\Om(\hat{P}) | \lan \Phi_{Q_{\ell}\hP},\Psi_{Q_{\ell}\hP}\ran|^2  }, \label{a1}
\eeqa
where  $C_{\ell}\neq 0$ and  the sum extends over $\ell$ s.t. $n_{\ell}=\bar{n}:=\max_{\ell'}{n_{\ell'}}$.

\end{enumerate}

For $d=1$ the angular integrations above amount to summation over $\hat P=\pm 1$.

\end{proposition}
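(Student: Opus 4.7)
The plan is to use the fiber decomposition to reduce $\ti G_T(k,t)$ to a $P$-integral concentrated on the spheres of global minima, perform a Laplace-type asymptotic analysis around each such sphere, and then compare decay rates in the final ratio. From \eqref{A0} together with $\phi=\Pi^{*}\int^{\oplus}dP\,\phi_P$ one obtains
\[
\ti G_T(k,t)=\int_{\real^d}dP\,\langle\phi_P,\,e^{-TH(P)}e^{-tH(P+k)}e^{-TH(P)}\phi_P\rangle,
\]
with integrand bounded in modulus by $\|\phi_P\|^2 e^{-2TE(P)}$, since $\|e^{-TH(P)}\|=e^{-TE(P)}$ and $\|e^{-tH(P+k)}\|\le 1$ (standing assumption 0 gives $H(\cdot)\ge 0$). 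I would next split the integration into small neighborhoods $N_\ell\subset\ti M_\ell$ of each sphere $M_\ell$ and the rest. On the rest, $E(P)\ge\de>0$ for some $\de$, so that the contribution is $O(e^{-2T\de})$ and hence subleading to any polynomial in $T^{-1}$.

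On each $N_\ell$ I would exploit that $E(P)$ is a simple eigenvalue of $H(P)$ (assumption 3): the vector $\phi_P-\langle\psi_P,\phi_P\rangle\psi_P$ carries no spectral mass at $E(P)$, so the spectral theorem plus dominated convergence on the spectral measure yields $e^{TE(P)}e^{-TH(P)}\phi_P\to\langle\psi_P,\phi_P\rangle\psi_P$ pointwise on $\ti M_\ell$, and thus
\[
F_T(k,t,P):=e^{2TE(P)}\langle\phi_P,e^{-TH(P)}e^{-tH(P+k)}e^{-TH(P)}\phi_P\rangle\xrightarrow[T\to\infty]{} F_\infty(k,t,P)
\]
with $F_\infty(k,t,P)=|\langle\psi_P,\phi_P\rangle|^2\langle\psi_P,e^{-tH(P+k)}\psi_P\rangle$, continuous on $\ti M_\ell$ by assumptions 0, 3, 4. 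Using assumption 2 to write $E(P)\sim a_\ell(|P|-Q_\ell)^{n_\ell}$ near $M_\ell$ with $a_\ell>0$, I would rescale radially: for $Q_0=0$ the substitution $P=T^{-1/n_0}Q$ produces a factor $T^{-d/n_0}$ multiplying $F_\infty(k,t,0)\int dQ\,e^{-2a_0|Q|^{n_0}}$, while for $Q_\ell>0$ polar coordinates $P=(Q_\ell+T^{-1/n_\ell}s)\hat P$ give $T^{-1/n_\ell}$ times $Q_\ell^{d-1}\bigl(\int ds\,e^{-2a_\ell s^{n_\ell}}\bigr)\int d\Omega(\hat P)\,F_\infty(k,t,Q_\ell\hat P)$.

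The stated formulas follow by comparing the decay rates $\al_0=d/n_0$ and $\al_\ell=1/n_\ell$ and keeping in the ratio $G_T(k,t)=\ti G_T(k,t)/\ti G_T(0,t)$ only the slowest-decaying contributions. When $Q_0=0$ is the only minimum, the common $T^{-d/n_0}$ cancels and $\langle\psi_0,e^{-tH(0)}\psi_0\rangle=1$ (from $H(0)\psi_0=0$ by normalization) yields \eqref{one-minimum-formula}. When additional minima are present, the subcases $n>n_0/d$, $n=n_0/d$, $n<n_0/d$ reproduce \eqref{a}, \eqref{b}, \eqref{c} respectively, the positive constants $C_\ell$ (and $c_0,c_\ell$ in case (b)) being identified with the Gaussian-like radial integrals above. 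Case 3 is treated just like case 2(a), retaining only those $\ell$ with $n_\ell=\bar n$ maximal to arrive at \eqref{a1}.

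The main obstacle is the rigorous execution of the Laplace step: the spectral-theoretic convergence $F_T\to F_\infty$ is a priori only pointwise, whereas after rescaling the integrand is evaluated on sequences $P_T\to Q_\ell\hat P$, which demands some form of local uniformity near each minimum. I would keep $F_T$ inside the integral and rely on the uniform bound $|F_T(k,t,P)|\le\|\phi_P\|^2$ together with dominated convergence in the rescaled variable, so that the joint limit $F_T(k,t,P_T)\to F_\infty(k,t,Q_\ell\hat P)$ reduces to continuity of $F_\infty$ on $M_\ell$ (assumptions 0, 3, 4) plus a uniform smallness estimate on $\|e^{TE(P)}e^{-TH(P)}\phi_P-\langle\psi_P,\phi_P\rangle\psi_P\|$ in a neighborhood of $M_\ell$; this last piece is the most delicate, as it has to be extracted from the continuity of the spectral projection $P\mapsto|\psi_P\rangle\langle\psi_P|$ in the absence of a quantitative gap above $E(P)$.
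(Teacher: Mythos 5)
Your proposal follows essentially the same route as the paper's proof: pass to the fiber integral, restrict to neighbourhoods $\tilde M_\ell$ of the spheres of global minima, project $\Phi_P$ onto the ground state $\Psi_P$, go to polar coordinates, and compare the Laplace decay rates $T^{-d/n_0}$ (at $Q_0=0$) versus $T^{-1/n_\ell}$ (at $Q_\ell>0$), retaining only the slowest-decaying contributions in the ratio. The identification of the constants $C_\ell$ with Gamma-type radial integrals and the use of $\langle\Psi_0,\e^{-tH(0)}\Psi_0\rangle=1$ (from $E_{\mrm r}(Q_\ell)=0$) also match the paper's computation.

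The one point worth noting is the concern you flag yourself, about promoting the pointwise strong convergence $\e^{TE(P)}\e^{-TH(P)}\Phi_P\to\langle\Psi_P,\Phi_P\rangle\Psi_P$ to something stable under the rescaling $P\to P_T$ in the absence of a quantitative gap above $E(P)$. The paper handles this step tersely: after noting $\mrm{n\!-\!lim}_{T\to\infty}\e^{-TH(P)}M_P((\de,\infty))=0$ and $\mrm{s\!-\!lim}_{T\to\infty}\e^{-TH(P)}M_P((E(P),\infty))=0$, it simply states that ``it suffices to study'' the projected quantity $\ti G_T^{(1)}$, without addressing the uniformity of the strong limit on $\tilde M$. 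So the delicacy you isolate is present in the paper's argument as well; you are not missing a key idea that the paper supplies, and your awareness of this issue is if anything a point in favour of the proposal.
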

\begin{proof}  In the fiber representation expression~(\ref{A}) has the following form
\beqa
 \ti{G}_{T}(k,t) = \int_{\real^ d} \mrm{d}P  \lan \Phi_P, \e^{-TH(P)} \e^{-tH(P+k)} \e^{-TH(P)} \Phi_P\ran. 
\eeqa
We denote the spectral measure of $H(P)$ by $M_P(\,\cdot\,)$ and choose $\de>0$ s.t. $E(P)\leq \de$ implies that $P\in \ti{M}:=\bigcup_{\ell=0}^{L}\ti{M}_{\ell}$.
By spectral calculus, we have
\beqa
\mrm{n}\textrm{-}\mrm{lim}_{T\to \infty}\e^{-TH(P)} M_P((\de, \infty))=0, \quad \mrm{s}\textrm{-}\mrm{lim}_{T\to \infty}\e^{-TH(P)} M_P((E(P), \infty))=0.
\eeqa
Therefore, it suffices to study 
\begin{align}
\ti{G}^{(1)}_{T}(k,t) &=   \int_{  \ti{M}  } \mrm{d}P  \lan \Phi_P,\Psi_P\ran \lan \Psi_P, \e^{-TH(P)} \e^{-tH(P+k)} \e^{-TH(P)} \Psi_P\ran \lan \Psi_P,\Phi_P\ran \non\\
&=\int_{ \ti{M} }\mrm{d}P \,| \lan \Phi_P,\Psi_P\ran|^2      \e^{-2TE(P)}    \lan \Psi_P, \e^{-tH(P+k)}  \Psi_P\ran \non\\
&= \int_{ \ti{M}  } \mrm{d}P \,| \lan \Phi_P,\Psi_P\ran|^2      \e^{-(2T+t)E(P)}    \lan \Psi_P, \e^{-tH(P+k)} \e^{tH(P)}  \Psi_P\ran.
\end{align}
Hence, setting $  {G}^{(1)}_{T}(k,t):=  \ti{G}^{(1)}_{T}(k,t)/ \ti{G}^{(1)}_{T}(0,t)$,
\begin{align}
{G}^{(1)}_{T}(k,t)=  \int_{\ti{M}} \mrm{d}P \, \bigg\{  \fr{| \lan \Phi_P,\Psi_P\ran|^2      \e^{-(2T+t)E(P)} }{  \int_{\ti{M}} \mrm{d}P'  | \lan \Phi_{P'},\Psi_{P'}\ran|^2  \e^{-(2T+t)E(P')} } \bigg\}\lan \Psi_P, \e^{-tH(P+k)} \e^{tH(P)}  \Psi_P\ran.
\end{align}
We write $\hP:=P/|P|$ and move on to polar coordinates in $P$ and $P'$ integrations:
\begin{align}
{G}^{(1)}_{T}(k,t)
&=  \sum_{\ell}\int \mrm{d}\Om(\hat{P}) \int_{\De_{\ell}}\mrm{d}|P|\, |P|^{d-1} \,\lan \Psi_{|P|\hP}, \e^{-tH(|P|\hP+k)} \e^{tH(|P|\hP)}  \Psi_{|P|\hP}\ran\non\\
&\phantom{44}\times \bigg\{  \fr{| \lan \Phi_{|P|\hP },\Psi_{|P|\hP}\ran|^2   \e^{-(2T+t)E_{\mrm{r}}(|P|) } }
{    \sum_{\ell}\int \mrm{d}\Om(\hat{P}') \int_{\De_{\ell}}\mrm{d}|P'|\, |P'|^{d-1} | \lan \Phi_{|P'| \hP'},\Psi_{|P'|\hP'}\ran|^2  \e^{-(2T+t)E_{\mrm{r}}( |P'|)} } \bigg\},
 \label{numerator-and-denominator}
\end{align}
where we also used that $P\mapsto E(P)$ is rotation invariant. 

Let us first consider a possible global minimum at zero. Since $E$ is analytic near zero, we have that  $E_{\mrm{r}}(|P|)\sim |P|^{n_0}$ {\rc  in this region} for some $n_0\in \nat_0$, $n_0\geq 2$. Thus 
an elementary analysis gives  for the numerator  in (\ref{numerator-and-denominator}) 
\beqa
& & \lim_{T\to\infty}(2T+t)^{d/n_0}\int \mrm{d}\Om(\hat{P}) \int_{\De_0}\mrm{d}|P|\, |P|^{d-1} | \lan \Phi_{|P| \hP},\Psi_{|P|\hP}\ran|^2  \e^{-(2T+t)E_{\mrm{r}}(|P|)} \times \non\\
& &\ph{444444444444444}\times \lan \Psi_{|P|\hP }, \e^{-tH(|P|\hP+k)} \e^{tH(|P| \hP)}  \Psi_{|P|\hP}\ran\non\\
& &\ph{444444444444} =C_{0}  | \lan \Phi_{0},\Psi_{0}\ran|^2  \lan \Psi_0, \e^{-tH(k)}   \Psi_0\ran  \int_{0}^{\infty} \mrm{d}{U} \e^{-U}\, U^{(d/n_0)-1}, \label{numerator-zero}
\eeqa
for some $C_{0}\neq 0$. An analogous formula holds for the denominator in (\ref{numerator-and-denominator})
\beqa
& & \lim_{T\to\infty}(2T+t)^{d/n_0}\int \mrm{d}\Om(\hat{P}) \int_{\De_0}\mrm{d}|P|\, |P|^{d-1} | \lan \Phi_{|P| \hP},\Psi_{|P|\hP}\ran|^2  \e^{-(2T+t)E_{\mrm{r}}(|P|) } \times \non\\
& &\ph{444444444444} =C_{0}  | \lan \Phi_{0},\Psi_{0}\ran|^2    \int_{0}^{\infty} \mrm{d}{U} \e^{-U}\, U^{(d/n_0)-1}. \,\,\label{denominator-zero}
\eeqa

Let us now analyse a global  minimum at $Q_{\ell}\neq 0$.  By analyticity, we have that  $E_{\mrm{r}}( {\rc Q_{\ell}+R})\sim R^{n_{\ell}}$ near $R=0$ for some $n_{\ell}\in \nat_0$,
$n_{\ell}\geq 2$. In this case we obtain for  the numerator in (\ref{numerator-and-denominator})
\beqa
& & \lim_{T\to\infty}(2T+t)^{(1/n_{\ell})}\int \mrm{d}\Om(\hat{P}) \int_{\De_{\ell}} \mrm{d}|P|\, |P|^{d-1} | \lan \Phi_{|P| \hP},\Psi_{|P|\hP}\ran|^2  \e^{-(2T+t)
E_{\mrm{r}} (|P|) }\times \non\\
& &\ph{444444444444}\times\lan \Psi_{|P|\hP }, \e^{-tH(|P|\hP+k)} \e^{tH(|P| \hP)}  \Psi_{|P|\hP}\ran\non\\
& & \ph{444}=C_{\ell} \int \mrm{d}\Om(\hat{P}) | \lan \Phi_{Q_{\ell}\hP},\Psi_{Q_{\ell}\hP}\ran|^2 \lan \Psi_{Q_{\ell}\hP}, \e^{-tH({Q_{\ell}\hP}+k)} \Psi_{Q_{\ell}\hP}\ran  \int_{0}^{\infty} \mrm{d}{U} \,\e^{-U}\, U^{ (1/n)-1 }, \quad\quad \label{numerator-one}
\eeqa
 for some $C_{\ell}\neq 0$. For the denominator in  (\ref{numerator-and-denominator}) we get in this case
 \beqa
& & \lim_{T\to\infty}(2T+t)^{(1/n_{\ell})}\int \mrm{d}\Om(\hat{P}) \int_{\De_{\ell}}\mrm{d}|P|\, |P|^{d-1} | \lan \Phi_{|P| \hP},\Psi_{|P|\hP}\ran|^2  
\e^{-(2T+t)E_{\mrm{r}}(|P|)}\times \non\\
& & \ph{4444}=C_{\ell} \int \mrm{d}\Om(\hat{P}) | \lan \Phi_{Q_{\ell}\hP},\Psi_{Q_{\ell}\hP}\ran|^2   \int_{0}^{\infty} \mrm{d}{U} \,\e^{-U}\, U^{ (1/n)-1 }. \label{denominator-one}
\eeqa
By substituting (\ref{numerator-zero})--(\ref{denominator-one}) back to formula~(\ref{numerator-and-denominator})  and considering the different cases from the
statement of the proposition, we complete the proof. \end{proof}
\begin{lemma}\label{epsilon-limit} Suppose that $Q\geq 0$ is a global minimum of $|P|\mapsto E_{\mrm{r}}(|P|)$. Then the following relations hold:
\beqa
& &\lim_{\eps\to 0}\lan \Psi_0, \e^{-\fr{t}{\eps^2}H(\eps k)}   \Psi_0\ran= \e^{-\fr{tk^2}{2}(\pa^2_{|P|}E_{\mrm{r}})(0)} \quad\mathrm{for}\quad Q=0,  \\
& &\lim_{\eps\to 0} \lan \Psi_{Q\hP}, \e^{ -\fr{t}{\eps^2}H({Q \hP}+\eps k)} \Psi_{Q\hP}\ran= \e^{ -\fr{t}{2} (\pa^2_{|P|}E_{\mrm{r}})(Q) (\hat{P}\cdot k)^2} \quad \mathrm{for}\quad Q>0.
\eeqa
\end{lemma}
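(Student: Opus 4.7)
The plan is to use standing assumption 3 (simplicity of the ground state $\Psi_P$ and continuity of $P\mapsto \Psi_P$ on $\ti{M}_\ell$) to split the semigroup via the spectral theorem into its ground-state contribution and a small remainder. With $P_\eps:=\eps k$ in the case $Q=0$ and $P_\eps:=Q\hP+\eps k$ in the case $Q>0$, for $\eps$ small enough $P_\eps$ lies in $\ti{M}_\ell$, so one may write
\[
\e^{-\fr{t}{\eps^2}H(P_\eps)}=\e^{-\fr{t}{\eps^2}E(P_\eps)}|\Psi_{P_\eps}\ran\lan \Psi_{P_\eps}|+\e^{-\fr{t}{\eps^2}H(P_\eps)}M_{P_\eps}((E(P_\eps),\infty)),
\]
and taking the expectation in $\Psi_0$ (resp.\ $\Psi_{Q\hP}$) splits the desired quantity into the ground-state piece $|\lan \Psi_{P_\eps},\Psi_0\ran|^2\e^{-\fr{t}{\eps^2}E(P_\eps)}$ and a remainder.

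For the ground-state piece, the overlap tends to $1$ by norm-continuity of $P\mapsto \Psi_P$ at $0$ (resp.\ $Q\hP$), and the exponent is handled by Taylor expansion using analyticity of $E_\mrm{r}$ (assumption 2) together with $E_\mrm{r}(Q_\ell)=0$. For $Q=0$, rotation invariance and $E_\mrm{r}'(0)=0$ yield $E(\eps k)=\fr{1}{2}\eps^2 k^2 (\pa^2_{|P|}E_\mrm{r})(0)+O(\eps^3)$, so the exponent converges to the claimed value (in the degenerate case $n_0>2$ both sides of the identity reduce to $1$). For $Q>0$, the expansion $|Q\hP+\eps k|=Q+\eps(\hP\cdot k)+O(\eps^2)$ combined with $E_\mrm{r}'(Q)=0$ gives $E(Q\hP+\eps k)=\fr{1}{2}\eps^2(\pa^2_{|P|}E_\mrm{r})(Q)(\hP\cdot k)^2+O(\eps^3)$. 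The remainder is then controlled by the operator bound $0\leq \e^{-(t/\eps^2)H(P_\eps)}\leq 1$, valid since $E(P)\geq 0$ by assumption 1:
\[
\bigl|\lan \Psi_0, \e^{-\fr{t}{\eps^2}H(P_\eps)}M_{P_\eps}((E(P_\eps),\infty))\Psi_0\ran\bigr|\leq 1-|\lan \Psi_{P_\eps},\Psi_0\ran|^2\,\underset{\eps\to 0}{\longrightarrow}\,0,
\]
and the same estimate holds with $\Psi_0$ replaced by $\Psi_{Q\hP}$.

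The main subtle point is the $Q>0$ case: $E$ has no strict pointwise quadratic minimum at $Q\hP$, since by rotation invariance the entire sphere of radius $Q$ minimizes $E$ and the Hessian is degenerate in the tangential directions. The directional expansion above is what isolates the single non-trivial radial direction, so only the component $\hP\cdot k$ survives. A secondary observation worth flagging is that the standing assumptions do not provide a spectral gap above $E(P)$; nevertheless no gap is needed, since the uniform bound $\e^{-t\lambda/\eps^2}\leq 1$ for $\lambda\geq 0$ already suffices to suppress the complementary spectral contribution via the overlap estimate.
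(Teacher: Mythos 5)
Your proof is correct and rests on exactly the same three ingredients as the paper's: norm-continuity of $P\mapsto\Psi_P$ (assumption 3), the Taylor expansion of $E_{\mrm{r}}$ around $Q$ via assumption 2, and the contraction bound $\|\e^{-tH(P_\eps)/\eps^2}\|\leq 1$ from positivity, which obviates any spectral-gap requirement. The paper's bookkeeping differs slightly --- it shifts the vector, writing $\Psi_0=\Psi_{\eps k}+(\Psi_0-\Psi_{\eps k})$ so the leading term is exactly $\e^{-tE(\eps k)/\eps^2}$ and the remainder is bounded by $2\|\Psi_0-\Psi_{\eps k}\|$, whereas you decompose the operator with the spectral projection and bound the remainder by $1-|\lan\Psi_{P_\eps},\Psi_0\ran|^2$ --- but these are dual implementations of the same argument.
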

\begin{proof} Suppose that $Q=0$. By shifting the vector $\Psi_{0}=\Psi_{\eps k}+(\Psi_{0}-\Psi_{\eps k})$, we obtain
\begin{align}
\lan \Psi_0, \e^{-\fr{t}{\eps^2}H(\eps k)}   \Psi_0\ran&=\lan \Psi_{\eps k}, \e^{-\fr{t}{\eps^2}H(\eps k)}   \Psi_{\eps k}\ran+R(\eps)\non\\
&=\e^{-\fr{t}{\eps^2}E (\eps k)} +R(\eps) \underset{\rc \eps\to 0}{\to}  \e^{-\h tk^2 (\pa^2_{|P|} E_{\mrm{r}})(0)  },
\end{align}
where $R(\eps)$ is an error term which tends to zero as $\eps\to 0$ {\rc by assumption 3}.

Concerning the case $Q>0$, we shift the vector as follows $\Psi_{  Q\hP }=\Psi_{Q\hP+\eps k}+(\Psi_{Q\hP}-\Psi_{Q\hP+\eps k})$. This gives
\begin{align}
\lan \Psi_{Q\hP}, \e^{-\fr{t}{\eps^2} H({Q\hP}+\eps k)}  \Psi_{Q\hP}\ran&= \lan \Psi_{Q\hP+\eps k}, \e^{ -\fr{t}{\eps^2} H( {Q\hP}+\eps k)}  \Psi_{Q\hP+\eps k}\ran+R(\eps)\non\\
&= \e^{ -\fr{t}{\eps^2} E(Q\hP+\eps k) } +R(\eps) \underset{\rc \eps\to 0}{\to}   \e^{ -\fr{t}{2} (\pa^2_{|P|}E_{\mrm{r}}(Q) )(\hat{P}\cdot k)^2},
\end{align}
which completes the proof. \end{proof}

\textbf{Proof of Theorem~\ref{main-general-theorem}:} We start from the case $d\geq 2$. Suppose, by contradiction, that there is no global minimum at zero. Then, from the last
part of Proposition~\ref{main-proposition} and Lemma~\ref{epsilon-limit} we obtain
\begin{align}
\e^{-\fr{tk^2 \si^2  }{2} }= \fr{ \sum_{\ell}C_{\ell} \int \mrm{d}\Om(\hat{P}) | \lan \Phi_{Q_{\ell}\hP},\Psi_{Q_{\ell}\hP}\ran|^2 
 \e^{ -\fr{t k^2}{2} (\pa^2_{|P|}E_{\mrm{r}}(Q_{\ell}) )(\hat{P}\cdot \hat{k})^2  } }
{\sum_{\ell}C_{\ell} \int \mrm{d}\Om(\hat{P}) | \lan \Phi_{Q_{\ell}\hP},\Psi_{Q_{\ell}\hP}\ran|^2  }. \label{first-equation-proof-theorem}
\end{align}
We denote $x^2:=tk^2\si^2/2$, $f_{\ell}(\hP):=C_{\ell}| \lan \Phi_{Q_{\ell}\hP},\Psi_{Q_{\ell}\hP}\ran|^2$, $m^{-1}_{\ell}:=\pa^2_{|P|}E(Q_{\ell})$. 
This gives
\begin{align} 
\sum_{\ell} \int \mrm{d}\Om(\hat{P})  f_{\ell}(\hP)&=\sum_{\ell}\int \mrm{d}\Om(\hat{P}) \, f_{\ell}(\hat P) e^{ {x^2}(1-\fr{m_{\ell}^{-1}}{\si^2}  (\hat P \cdot \hat k)^2 ) }. \label{averaging}
\end{align}
By averaging both sides w.r.t. the group of rotations we can assume that the  functions $f_{\ell}$ are constant and non-zero.
Suppose first that all $m_{\ell}^{-1}$ are zero. Then we immediately obtain a contradiction by taking $x^2\to\infty$. Now suppose that some\footnote{We note as an aside, that if some $m_{\ell}^{-1}>0$ then all $m_{\ell}^{-1}>0$ by definition of $n$.  } 
$m_{\ell_1}^{-1}>0$.
Then we obtain from (\ref{averaging})
\begin{align}
\sum_{\ell} \int \mrm{d}\Om(\hat{P})  f_{\ell}(\hP)  \geq  \int \mrm{d}\Om(\hat{P}) \, f_{\ell_1}(\hat P) \e^{ {x^2}(1-\fr{m_{\ell_1}^{-1} }{\si^2 }  (\hat P \cdot \hat k)^2 ) }
 \chi\big(1-\fr{m_{\ell_1}^{-1} }{\si^2}  (\hat P \cdot \hat k)^2 >0  \big). \label{minimum-at-zero-last-step}
\end{align}
As before, we obtain a contradiction by taking $x^2\to \infty$, due to the fact that  the functions $f_{\ell}$ are constant and non-zero.

Next, we prove part (b). Suppose, by contradiction, that there are several global minima in addition to the global minimum at zero.  
Let us assume first that case (a) from  Proposition~\ref{main-proposition} occurs, that is $n>\fr{n_{0}}{d}$.
With the help of Lemma~\ref{epsilon-limit}, we obtain from (\ref{a})
\begin{align}
\e^{-\fr{tk^2 \si^2  }{2} }= \fr{ \sum_{\ell}C_{\ell} \int \mrm{d}\Om(\hat{P}) | \lan \Phi_{Q_{\ell}\hP},\Psi_{Q_{\ell}\hP}\ran|^2 
 \e^{ -\fr{t k^2}{2} (\pa^2_{|P|}E_{\mrm{r}}(Q_{\ell}) )(\hat{P}\cdot \hat{k})^2  } }
{\sum_{\ell}C_{\ell} \int \mrm{d}\Om(\hat{P}) | \lan \Phi_{Q_{\ell}\hP},\Psi_{Q_{\ell}\hP}\ran|^2  }.
\end{align}
We obtain a contradiction by repeating the steps (\ref{first-equation-proof-theorem})--(\ref{minimum-at-zero-last-step}) above.

Now let us assume that case (b) of Proposition~\ref{main-proposition} occurs, that is $n=\fr{n_{0}}{d}$. Since $d\geq 2$ and $n\geq 2$ we obtain 
that $n_0>2$ which implies $(\pa^2_{|P|} E_{\mrm{r}})(0)=0$.   With the help of Lemma~\ref{epsilon-limit}, we obtain from (\ref{b}) using the
notation introduced above
\beqa
& & c_{0}  | \lan \Phi_{0},\Psi_{0}\ran|^2+  \sum_{\ell} \int \mrm{d}\Om(\hat{P}) f_{\ell}(\hat P)  \non\\
& &= c_{0}  | \lan \Phi_{0},\Psi_{0}\ran|^2 \e^{x^2}+ \sum_{\ell} \int \mrm{d}\Om(\hat{P}) f_{\ell}(\hat P)
 \e^{ {x^2}(1-\fr{m_{\ell}^{-1} }{\si^2 }  (\hat P \cdot \hat k)^2 ) }.     \label{case-b}
\eeqa
Due to the presence of the non-zero terms involving $c_{0}  | \lan \Phi_{0},\Psi_{0}\ran|^2$, we immediately obtain a contradiction by taking $x^2\to\infty$.

Finally, suppose that we are in case (c)  of Proposition~\ref{main-proposition}, that is $n<\fr{n_{0}}{d}$. Also in this case we have $n_0>2$ which implies $(\pa^2_{|P|} E_{\mrm{r}})(0)=0$. By equation (\ref{c}) and  Lemma~\ref{epsilon-limit},  we obtain $\e^{-x^2}=1$ which is immediately a contradiction.
This concludes the proof of part (b) of the theorem.

Given that we have only one global minimum, we can apply formula (\ref{one-minimum-formula}). Together with Lemma~\ref{epsilon-limit}, we obtain
\beqa
\e^{-\fr{tk^2 \si^2  }{2} }= \e^{-\fr{tk^2 (\pa^2_{|P|} E_{\mrm{r}})(0)     }{2} },
\eeqa
which gives part (a) of the theorem.

For $d=1$, the reasoning above requires several modifications. From the assumption that there is no global minimum at zero
we obtain via (\ref{averaging}) that $m^{-1}_{\ell}=\si^2$ for $\ell=0,1,\ldots, L$, which is what we wanted to prove. 

Now suppose that there is a global minimum at zero and possibly some non-zero global minima. 
In the case $n>\fr{n_{0}}{d}$ from the fact that $n_0\geq 2$
we conclude that $n>2$, hence $m^{-1}_{\ell}=0$ for $\ell\neq 0$.   In this situation formula (\ref{averaging}) gives directly a
contradiction. 

In the case $n=\fr{n_{0}}{d}$ we distinguish two sub-cases. First, for $n=n_0>2$  we have $m_{\ell}^{-1}=0$ for
all $\ell$, including $\ell=0$, and thus formula (\ref{case-b}) gives a contradiction. Second, for $n=n_0=2$ we have $m_{\ell}^{-1}\neq 0$
and thus formula~(\ref{case-b}) has to be rewritten as follows
\beqa
& & c_{0}  | \lan \Phi_{0},\Psi_{0}\ran|^2+  \sum_{\ell} \int \mrm{d}\Om(\hat{P}) f_{\ell}(\hat P)  \non\\
& &= c_{0}  | \lan \Phi_{0},\Psi_{0}\ran|^2 \e^{x^2(1-\fr{m_{0}^{-1} }{\si^2 }   )   }+ \sum_{\ell} \int \mrm{d}\Om(\hat{P}) f_{\ell}(\hat P)
 \e^{ {x^2}(1-\fr{m_{\ell}^{-1} }{\si^2 }   ) },     \label{case-b-1}
\eeqa
where angular integration denotes now summation over $\hat{P}=\pm 1$. Clearly, we avoid a contradiction  iff $m_{\ell}^{-1}=\si^2$
for $\ell=0,1,\ldots, L$.  (It is important here that if $n=\mrm{max}_{\ell\neq 0}(n_{\ell})=2$ then $ n_{\ell}=2$ for all $\ell\neq 0$). 

In the case  $n<\fr{n_{0}}{d}$ we obtain a contradiction as before. Thus in the case $d=1$ the assumption that there are several
global minima of $E_{\mrm{r}}$ led us to the conclusion that the inverses of their effective masses $ \pa^2_{|P|}E_{\mrm{r}}(Q_{\ell})$, $\ell=0,1,\ldots L$,
must all be equal to $\si^2$.  \null\hfill\qed\\

\noindent \textbf{Acknowledgements.}  We thank Fumio Hiroshima, Tadahiro Miyao, Chiranjib  Mukherjee,  and Jacob Schach M\o ller
for helpful discussions. This work was partially supported by the Deutsche Forschungsgemeinschaft (DFG) within the grant
DY107/2-1.

\appendix

\section{Analyticity  of the Fr\"ohlich polaron  in total momentum} \label{holomorphy}
\setcounter{equation}{0}

In this appendix we verify that $P\mapsto H^{\mrm{Fr}}(P)$ is a real analytic family as specified in the standing assumption 0 of Section~\ref{General-analysis}.

\begin{lemma}\label{resolvent-set-lemma} Suppose that $\xi \notin \mrm{spec}(H(P_0))$. Then $\xi \notin  \mrm{spec}(H^{\mrm{Fr}}(P))$ for 
$P$ in a neighbourhood  $N_{P_0}$ of $P_0$. The function $N_{P_0}\ni  P\mapsto (H^{\mrm{Fr}}(P)-\xi)^{-1}$ is real analytic.
\end{lemma}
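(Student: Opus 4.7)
The plan is to deduce the lemma from the corresponding, manifestly true, statement for the UV-regularized Hamiltonians $H_{\ka}(P)$ introduced in (\ref{cut-off-Hamiltonians}), by passing to the norm resolvent limit $\ka\to\infty$. For each finite $\ka$ the identity
\[
H_{\ka}(P) = H_{\ka}(P_0) - (P-P_0)\cdot P_{\mrm{f}} + \tfrac{1}{2}(|P|^2-|P_0|^2)
\]
exhibits $P\mapsto H_{\ka}(P)$ as a type-A Kato analytic family: the perturbation from $P_0$ is a polynomial of degree two in $P$ whose operator coefficients ($-P_{\mrm{f}}$ and scalars) are independent of $\ka$ and $H_{\ka}(P_0)$-bounded. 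Hence the Neumann expansion
\[
(H_{\ka}(P)-\xi)^{-1} = (H_{\ka}(P_0)-\xi)^{-1}\sum_{n\geq 0}\bigl[-V(P,P_0)(H_{\ka}(P_0)-\xi)^{-1}\bigr]^n,
\]
with $V(P,P_0):=-(P-P_0)\cdot P_{\mrm{f}} + \tfrac{1}{2}(|P|^2-|P_0|^2)$, extends $(H_{\ka}(P)-\xi)^{-1}$ holomorphically to a complex polydisc about $P_0$.

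The heart of the argument is to make the radius of this polydisc $\ka$-\emph{independent}. Two uniform estimates are needed: $\|(H_{\ka}(P_0)-\xi)^{-1}\|\leq C$ and $\|P_{\mrm{f}}(H_{\ka}(P_0)-\xi)^{-1}\|\leq C'$, for all $\ka\geq \ka_0$. The first follows at once from the hypothesis $\xi\notin \mrm{spec}(H^{\mrm{Fr}}(P_0))$ together with the norm resolvent convergence $H_{\ka}(P_0)\to H^{\mrm{Fr}}(P_0)$ recalled in Section~\ref{Froehlich-section}. The second is the main obstacle: the elementary operator inequality $P_{\mrm{f}}^2 \leq 2(P_0-P_{\mrm{f}})^2 + 2|P_0|^2$ would reduce matters to a relative bound of $(P_0-P_{\mrm{f}})^2$ against $H_{\ka}(P_0)$, but since $\|g_{\ka}\|_{L^2}\to\infty$ a naive Kato-Rellich estimate on the interaction term $\sqrt{\al}\,\phi(g_{\ka})$ produces a $\ka$-dependent constant.

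To circumvent this I would invoke the Gross/Nelson dressing transformation that underlies the existing proofs of norm resolvent convergence for the Fr\"ohlich model (notably \cite{Mi10, GW16}): a family of unitaries $T_{\ka}$ on $\mcFn$ for which $T_{\ka}^{-1}H_{\ka}(P_0)T_{\ka}$ admits $\ka$-uniform relative bounds on $N_{\mrm{f}}$ and converges in the norm resolvent sense to a well-defined self-adjoint limit. After an explicit control of $T_{\ka}^{-1}P_{\mrm{f}}T_{\ka}-P_{\mrm{f}}$ via the generator of $T_{\ka}$, this yields the required $\ka$-uniform bound $\|P_{\mrm{f}}(H_{\ka}(P_0)-\xi)^{-1}\|\leq C'$.

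With both estimates secured, the Neumann series converges uniformly in $\ka$ on a fixed complex polydisc $N_{P_0}\subset \complex^d$, so $\{P\mapsto (H_{\ka}(P)-\xi)^{-1}\}_{\ka}$ is a uniformly bounded family of holomorphic functions. On the real slice these converge in norm to $(H^{\mrm{Fr}}(P)-\xi)^{-1}$ by the pointwise norm resolvent convergence, and Vitali's theorem for Banach-space-valued holomorphic functions then forces the limit $P\mapsto (H^{\mrm{Fr}}(P)-\xi)^{-1}$ to be itself holomorphic on $N_{P_0}$; this gives the claimed real analyticity on the real trace and in particular $\xi\notin \mrm{spec}(H^{\mrm{Fr}}(P))$ throughout the real neighbourhood.
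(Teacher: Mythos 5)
Your proposal is correct and follows essentially the same route as the paper's Appendix~A: expand $(H_{\ka}(P)-\xi)^{-1}$ as a Neumann series around $P_0$, recognize that the $\ka$-independence of its radius of convergence hinges on a $\ka$-uniform bound of the form $\|P_{\mrm{f}}(H_{\ka}(P_0)-\xi)^{-1}\|\leq C'$, obtain that bound via the Gross transformation (exactly the content of the paper's Lemma~\ref{P-f-bound}, which uses $H^{\mrm{free}}(P)\leq C'(\ti H_{\ka}(P)+C)$ together with the explicit formula for $e^{T_{K,\ka}}P_{\mrm{f}}e^{-T_{K,\ka}}$), and then pass to the $\ka\to\infty$ limit using the known norm resolvent convergence. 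The only cosmetic difference is at the very end: you invoke Vitali's theorem for Banach-space-valued holomorphic functions to transfer holomorphy to the limit, whereas the paper exchanges the $\ka\to\infty$ limit with the summation in the Neumann series term by term, using the uniform bound; both are valid and rest on the same estimates.
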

\begin{proof} First, suppose that $\xi\notin \mrm{spec}(H_{\ka}(P))$  for $\ka$ sufficiently large and note that on $D(P_{\mrm{f}}^2+N_{\mrm{f}})$
\beqa
H_{\ka}(P)=H_{\ka}(P_0)+ \h (P-P_0)^2-(P-P_0)\!\cdot\! (P_{\mrm{f}}-P_0).
\eeqa
Consequently, for $P$ in a small neighbourhood  $N_{P_0}$ of $P_0$, which a priori may depend on $\ka$,  the series on the r.h.s. below
converges and defines the inverse of $(H_{\ka}(P)-\xi)$:
\beqa
\fr{1}{H_{\ka}(P)-\xi}=\fr{1}{H_{\ka}(P_0) -\xi}  \sum_{n=0}^{\infty}\!\bigg\{\!\! -(\h (P-P_0)^2-(P-P_0)\!\cdot\! (P_{\mrm{f}} -P_0)  ) \fr{1}{H_{\ka}(P_0) -\xi}\bigg\}^n.\,\, \label{expansion-in-P}
\eeqa
To eliminate the dependence of $N_{P_0}$ on $\ka$, we  show in Lemma~\ref{P-f-bound} below, that  
\beqa
\|P_{\mrm{f},j }\fr{1}{H_{\ka}(P_0) -\xi}\|\leq c, \quad j=1,2,3, \label{uniform-bound-P}
\eeqa
uniformly in $\ka$.

Now we intend to take the limit $\ka\to\infty$ on both sides of (\ref{expansion-in-P}). By  \cite[Theorem~VIII.23]{RSI}, if $\xi \notin \mrm{spec}(H^{\mrm{Fr}}(P) )$ then $\xi \notin\mrm{spec}(H_{\ka}{(P)})$ for 
$\ka$ sufficiently large and $ (H_{\ka}{ (P)}-\xi)^{-1}  \to (H^{\mrm{Fr}}{(P)}-\xi)^{-1}$ in norm.  As the same is true for $P$ replaced with $P_0$ and we can use (\ref{uniform-bound-P}) to exchange
the limit  $\ka\to\infty$ with summation in (\ref{expansion-in-P}).  Thus the proof is complete. \end{proof}
\begin{lemma} \label{P-f-bound} The following bounds hold uniformly in $\ka$:
\beqa
\|P_{\mrm{f},j} (H_{\ka}(P)+\i)^{-1}\|\leq c, \quad j=1,2,3.
\eeqa
\end{lemma}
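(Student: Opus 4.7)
The plan is to use the Gross (Bogoliubov) dressing transformation that renormalizes the UV-singular part of the interaction. A direct relative-bound argument on $H_{\ka}(P) = \tfrac{1}{2}(P-P_\mrm{f})^2 + N_\mrm{f} + \sqrt{\alpha}\,\phi(g_\ka)$ is not viable, because $\|g_\ka\|_{2}^{2}\sim \ka$, so $\phi(g_\ka)$ is not $N_\mrm{f}^{1/2}$-bounded uniformly in $\ka$. The strategy is therefore to pass to a unitarily equivalent form where the UV divergences cancel and then bound $P_{\mrm{f},j}$ there.

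\medskip

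First, I would introduce the dressing kernel $\beta_\ka(k):=-\sqrt{\alpha}\,\chi_{[0,\ka]}(|k|)\bigl(\sqrt{2}\pi |k|(1+\tfrac{1}{2}k^{2})\bigr)^{-1}$ and the unitary $U_\ka:=\exp\!\bigl(\int \beta_\ka(k)(a(k)-a^*(k))\mrm{d}k\bigr)$. The point of choosing $\beta_\ka$ with the denominator $1+\tfrac{1}{2}k^{2}$ is precisely that the dangerous factor $|k|^{-1}$ is tamed at large $|k|$, so that $\|\beta_\ka\|_{2}$ and $\|k_j\beta_\ka\|_{2}$ are bounded by a constant uniform in $\ka$. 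Using $U_\ka a(k)U_\ka^{-1} = a(k)+\beta_\ka(k)$ and rotation invariance of $\beta_\ka$ (which kills the $c$-number $\int k_j\beta_\ka^{2}\,\mrm{d}k$), one gets the clean identity
\[
U_\ka\, P_{\mrm{f},j}\, U_\ka^{-1} = P_{\mrm{f},j} + \phi(k_j\beta_\ka),
\]
where $\phi(k_j\beta_\ka)$ is an ordinary field operator with uniformly $L^2$ form factor.

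\medskip

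The next step, which is the core of the argument, is to show that the transformed Hamiltonian $\wt H_\ka(P):=U_\ka H_\ka(P) U_\ka^{-1}$ satisfies the uniform relative bound
\[
\bigl\|\bigl(\tfrac{1}{2}(P-P_\mrm{f})^{2} + N_\mrm{f}\bigr)\psi\bigr\| \;\leq\; C\bigl(\|\wt H_\ka(P)\psi\| + \|\psi\|\bigr)
\]
with $C$ independent of $\ka$. Expanding $U_\ka(\tfrac{1}{2}(P-P_\mrm{f})^{2})U_\ka^{-1}$, $U_\ka N_\mrm{f} U_\ka^{-1}$ and $U_\ka\sqrt{\alpha}\phi(g_\ka)U_\ka^{-1}$ and collecting terms, the choice of $\beta_\ka$ is designed so that the two $O(\|g_\ka\|_{2}^{2})$ contributions (from the diagonal shift of $N_\mrm{f}$ and from pairing $g_\ka$ with $\beta_\ka$) cancel, leaving only terms that are $N_\mrm{f}^{1/2}$-bounded with uniform constants by virtue of $\|\beta_\ka\|_{2}+\|k_j\beta_\ka\|_{2}\leq C$. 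This is precisely the Nelson-Gross renormalization identity and can be invoked from \cite{Mi10, F74, GW16}, where the computation is carried out explicitly.

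\medskip

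Finally, by unitarity of $U_\ka$,
\[
\bigl\|P_{\mrm{f},j}(H_\ka(P)+\i)^{-1}\bigr\| \;=\; \bigl\|\bigl(P_{\mrm{f},j}+\phi(k_j\beta_\ka)\bigr)(\wt H_\ka(P)+\i)^{-1}\bigr\|.
\]
The elementary form estimate $P_{\mrm{f},j}^{2}\leq 2(P-P_\mrm{f})^{2}+2|P|^{2}$ together with the uniform graph equivalence of the previous step gives $\|P_{\mrm{f},j}(\wt H_\ka(P)+\i)^{-1}\|\leq c_{1}$, while the $N_\mrm{f}^{1/2}$-bound on $\phi(k_j\beta_\ka)$ combined with $\|N_\mrm{f}^{1/2}(\wt H_\ka(P)+\i)^{-1}\|\leq c_{2}$ (which follows from the same graph equivalence) controls the second term. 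Adding these two estimates yields the claim.

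\medskip

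The main obstacle is the UV-cancellation identity in the middle step: tracking the quadratic and cross terms produced by $U_\ka(P-P_\mrm{f})^{2}U_\ka^{-1}$ and verifying that the divergent pieces cancel precisely against the linear-in-field terms from $U_\ka\phi(g_\ka)U_\ka^{-1}$. Everything else is essentially Kato-Rellich bookkeeping once the uniform graph equivalence of $\wt H_\ka(P)$ with the free operator is established.
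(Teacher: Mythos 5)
Your proposal takes essentially the same route as the paper: Gross (Bogoliubov) conjugation by $e^{T}$, the observation that $e^{T}P_{\mrm{f},j}e^{-T}=P_{\mrm{f},j}+\phi(k_j\beta)$ with the $c$-number killed by rotation invariance, the uniform graph-norm equivalence $H^{\mrm{free}}(P)\leq C'(\ti H_\ka(P)+C)$ imported from \cite{Mi10}, and then unitarity to transfer the bound back to $H_\ka(P)$.

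One inaccuracy is worth flagging. Your dressing kernel $\beta_\ka(k)=-\sqrt{\alpha}\,\chi_{[0,\ka]}(|k|)\bigl(\sqrt{2}\pi|k|(1+\tfrac12 k^2)\bigr)^{-1}$ is missing the infrared cutoff $\chi_{[K,\infty)}(|k|)$ that appears in the kernel $\beta_{K,\ka}$ of \cite[Appendix A]{Mi10}, which the paper also uses. That cutoff is not cosmetic: the graph-norm equivalence $H^{\mrm{free}}(P)\leq C'(\ti H_\ka(P)+C)$ that you invoke is obtained via a Kato--Rellich relative bound on the residual cross terms (such as $(P-P_\mrm{f})\cdot\phi(k\beta)$ and $\tfrac12\phi(k\beta)^2$), and that bound requires $\|k\beta\|_2$ to be small, which is achieved by taking $K$ large depending on $\alpha$. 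Without the $\chi_{[K,\infty)}$ factor, $\|k\beta_\ka\|_2$ is bounded uniformly in $\ka$ but not small, so the relative bound can fail for large $\alpha$ and the estimate from \cite{Mi10} does not directly apply to your $\beta_\ka$. If you use the kernel $\beta_{K,\ka}$ with $K$ as in \cite{Mi10}, the rest of your argument goes through verbatim.
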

\begin{proof} First{\rc ,} we recall some  material from \cite[Appendix A]{Mi10}, referring there for more details. Let  
\beqa
T_{K,\ka}=\int_{{\rc \real^3}} \mrm{d}k\, \be_{K,\ka}[a(k)-a^*(k)], \quad \be_{K,\ka}(k)= {\rc -}\sqrt{\al}{\rc \fr{1}{\sqrt{2}\pi} } \fr{\chi_{[0,\ka]}(|k|)}{|k|(1+k^2/2)} \chi_{[K,\infty)}(|k|),
\eeqa
where  $K$ is chosen  sufficiently large (depending on $\al$ but not on $\ka$ or $P$) 
as specified above Lemma~A.3 of \cite{Mi10}. Let $H^{\mrm{free}}(P)$ denote the Hamiltonians (\ref{cut-off-Hamiltonians}) with $\al=0$. Then the Gross-transformed  Hamiltonians $\ti{H}_{\ka}{\rc (P)}:=e^{T_{K,\ka}} H_{\ka}(P) e^{-T_{K,\ka}}$ 
{\rc are self-adjoint operators  which converge in the norm-resolvent sense to a limiting Hamiltonian $\ti{H}({\rc P})$ \cite[Proposition A.4]{Mi10}. 
As stated in the proof of this latter proposition, 
   $H^{\mrm{free}}(P)\leq C'(\ti{H}_{\ka}{\rc (P)}+C)$,
with $ C, C'$ independent of $\ka$. Hence,
\beqa
\| |P-P_{\mrm{f}}|   (\ti{H}_{\ka}{\rc (P)}+C)^{-1/2}\|,  \| N_{\mrm{f}}^{1/2}  (\ti{H}_{\ka}{\rc (P)}+C)^{-1/2}\|\leq C'. \label{bounds-from-Miyao}
\eeqa
Next, we can write on  $D(H^{\mrm{free}}(P))$
\beqa
e^{T_{K,\ka}}  P_{\mrm{f}} e^{-T_{K,\ka}}=P_{\mrm{f}}+a(k\be_{K,\ka})+a^*(k\be_{K,\ka})  + \int_{\rc \real^3} \mrm{d}k\,   k\,|\be_{K,\ka}(k)|^2,
\eeqa
where the last term is actually zero by symmetry. Noting that $  \int \mrm{d}k\,   \,|k\be_{K,\ka}(k)|^2    \leq c$, uniformly in $\ka$, we have   $\| a^{(*)}(k\be_{K,\ka})(1+N_{\mrm{f}})^{-1/2}\|\leq c$ uniformly in $\ka$.
Therefore, by estimates~(\ref{bounds-from-Miyao}),
\begin{align}
\| P_{\mrm{f},j}  (H_{\ka}(P)+C)^{-1}   \|  &\leq C_0( \| P_{\mrm{f},j }   (\ti{H}_{\ka}(P)+C)^{-1}\|
+ \|(1+ N_{\mrm{f}}  )^{1/2}  (\ti{H}_{\ka}(P)+C)^{-1}  \|)
\end{align}
is   uniformly bounded in $\ka$. This concludes the proof. }\end{proof}

\section{Proof of Proposition~\ref{corollary}} \label{proposition}
\setcounter{equation}{0}

 Let us consider first polaron-type models satisfying Condition C. For the standing assumption 0 {\rc we refer to the discussion of Lemma~\ref{Jacob}, part 6.
By part 2 of Lemma~\ref{Jacob},   $\mathcal{I}_0$ contains neighbourhoods of all the global minima of $|P|\mapsto E_{\mrm{r}}(|P|)$. Thus, considering other items
of this lemma,} it suffices to show that there
is a finite number of such minima to complete the proof of  Proposition~\ref{corollary} (a). {\rc To this end, we first note that by part 1 of Lemma~\ref{Jacob},
$P\mapsto E_{\mrm{ess}}(P)$ is bounded if $\om$ is bounded}. Now by
 part 4 of the same lemma, all global minima
must be localized in a compact set. (In particular, $P\mapsto E(P)$ cannot be a constant function in an open set, hence everywhere).
Now suppose there is a finite accumulation point $(Q_*, E_{\mrm{r}}(Q_{*}))$ of the set of global minima. Since the spectrum is closed, the corresponding sphere belongs to the spectrum. By Lemma~\ref{Jacob} parts 2 and 3, for $|P|=Q_*$, $E(P)$ is an isolated eigenvalue of $H(P)$.
Thus, by analyticity, this eigenvalue can be  continued to some neighbourhood of $Q_*$ in the radial direction.  As any neighbourhood 
contains infinitely many  global minima, we conclude that $P\mapsto E(P)$ is constant near $|P|= Q_{*}$, hence everywhere, which is a contradiction.

Let us now move on to part (b).
For the models in question we have $\Pi=F \e^{\mrm{i} P_{\mrm{f}}x } $, where $F$ is the unitary Fourier transform from $x$ to $P$ variable.
We decompose $\varphi \otimes\Omega=\Pi^{*}\int^{\oplus}_{\real^d} \mrm{d} P\phi_P$. Thus to find $\phi_P$ we compute
\beqa\label{Pi-on-state}
\Pi(\varphi \otimes\Omega)=\int^{\oplus}_{\real^d}\mrm{d}P \,  \hat{\varphi}(P)\Om,
\eeqa 
which gives $\phi_P=  \hat{\varphi}(P)\Om$. The function 
$P\mapsto |\lan  \phi_P, \psi_P\ran|= |\hat\varphi(P)|  \, |\lan  \Om, \psi_P\ran|$
is continuous by continuity of $\hat\varphi$ and analyticity of $P\mapsto |\psi_P\ran \lan \psi_P|$. It is non-zero
by our assumption on $\hat\varphi$ and by part 5 of Lemma~\ref{Jacob}.  

Concerning the Fr\"ohlich polaron, standing assumption 0 is proven in Appendix~\ref{holomorphy}. The  other claims are verified as above,
using Lemma~\ref{Froehlich-theorem}. 

\end{document}